\documentclass[sigconf, nonacm]{acmart}

\usepackage{amssymb}
\usepackage{algorithm}
\usepackage[noEnd=true]{algpseudocodex}
\usepackage{amsmath}
\usepackage{booktabs} 
\usepackage{amsthm}
\usepackage{tabularx}
\usepackage{multirow}

\usepackage{listings}
\usepackage{xcolor}
\usepackage{enumitem}

\lstdefinelanguage{cypher}{
  morekeywords={MATCH, RETURN, WITH, WHERE, CREATE, DELETE, DETACH, LIMIT, ORDER, BY, ASC, DESC, UNWIND, AS, DISTINCT, AT, LEAST, MARGINAL, RECALL},
  sensitive=true,
  morecomment=[l]{//},
  morestring=[b]',
  morestring=[b]",
}

\newtheorem{assumption}{Assumption}

\newcounter{SoniaNOC}
\stepcounter{SoniaNOC}

\newcounter{ParisNOC}
\stepcounter{ParisNOC}

\newcounter{XiangyuNOC}
\stepcounter{XiangyuNOC}

\newcounter{FabianNOC}
\stepcounter{FabianNOC}

\newcounter{HenrikNOC}
\stepcounter{HenrikNOC}

\newcounter{VasiaNOC}
\stepcounter{VasiaNOC}

\begin{document}


\title{ConRAD: Conformal Risk-Aware Neural Databases}












\author{Sonia Horchidan}
\affiliation{%
  \institution{KTH, Stockholm}
}
\email{sfhor@kth.se}

\author{Fabian Zeiher}
\affiliation{%
\institution{KTH, Stockholm}
}
\email{zeiher@kth.se}

\author{Xiangyu Shi}
\affiliation{%
  \institution{KTH, Stockholm}
}
\email{xiangyus@kth.se}

\author{Vasiliki Kalavri}
\affiliation{%
  \institution{Boston University}
}
\email{vkalavri@bu.edu}

\author{Henrik Boström}
\affiliation{%
  \institution{KTH, Stockholm}
}
\email{bostromh@kth.se}

\author{Ioannis Kontoyiannis}
\affiliation{%
  \institution{University of Cambridge}
}
\email{ik355@cam.ac.uk}

\author{Paris Carbone}
\affiliation{%
  \institution{KTH, Stockholm}
}
\email{parisc@kth.se}

\begin{abstract}


Querying incomplete knowledge graphs with neural predictors is powerful but dangerous. Errors compound across multi-hop pipelines with no formal bound on the completeness of results. We introduce ConRAD, the first framework to enforce declarative recall guarantees natively within a neural graph database query engine. Given a user-specified risk budget, ConRAD automatically derives per-operator prediction thresholds that satisfy the recall target with finite-sample, distribution-free statistical validity via Conformal Risk Control, while maximizing end-to-end precision. To scale calibration across multi-operator query topologies, we introduce a \emph{quantile-space scalarization} that reduces intractable high-dimensional threshold searches to a single parameter. We further design \emph{the conformal gate}, a novel physical operator that dynamically bypasses neural inference when local graph evidence suffices, eliminating unnecessary model inferences in dense graph regions. Evaluated across three benchmarks and three query topologies, ConRAD strictly satisfies all risk budgets, with empirical recall falling below the target by at most 0.046 across all settings. It reduces neural invocations to zero in near-complete graph regions, and achieves precision that matches or exceeds best-case static baselines that offer no guarantees and require manual threshold search. \looseness=-1

\end{abstract}

\maketitle



\section{Introduction}
\label{sec:intro}

\begin{figure}
    \centering
    \includegraphics[width=0.98\linewidth]{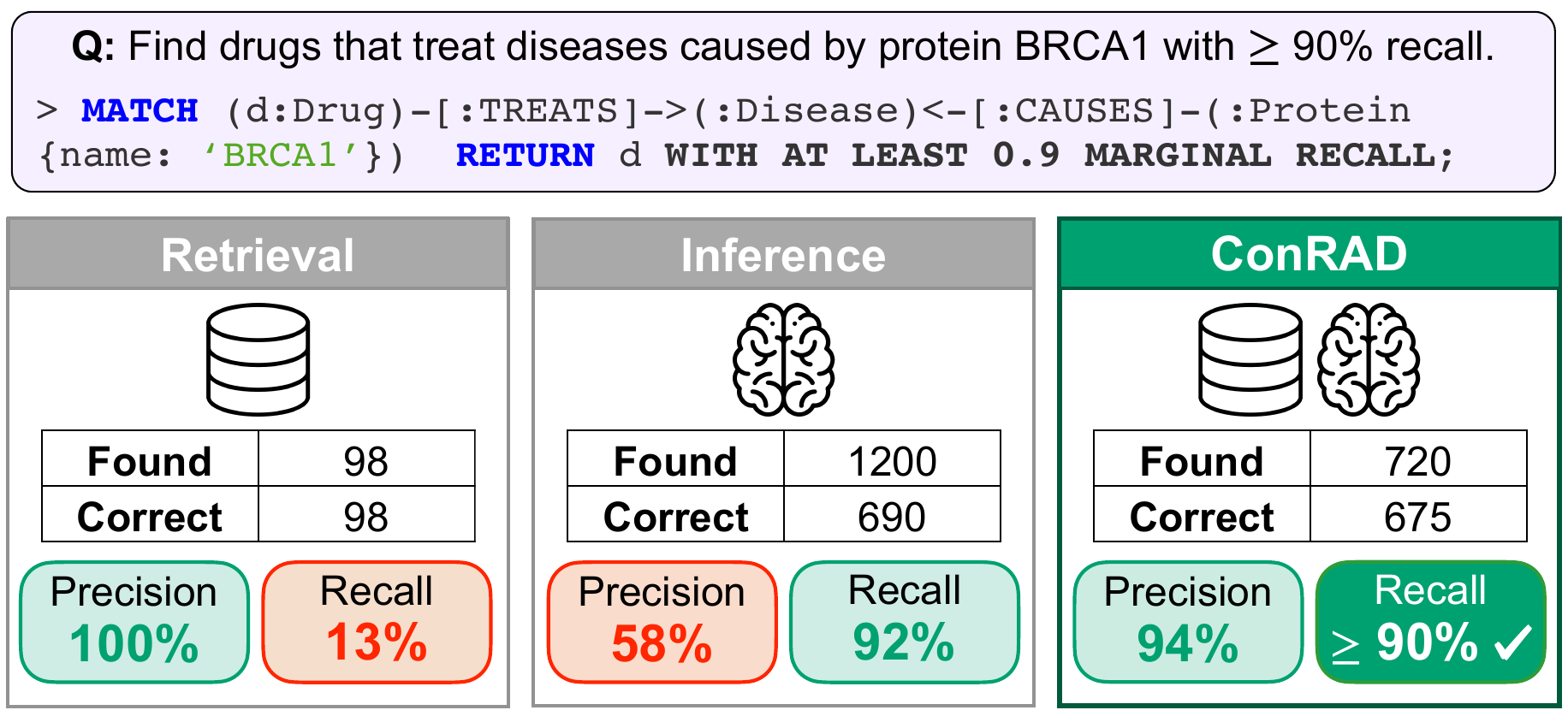}


    \caption{Resolving the precision-recall trade-off on a query 
with 750 ground truth answers. Pure retrieval  fails the recall target (13\% vs.\ $\ge 90\%$) due to graph  incompleteness. Uncalibrated inference meets the recall target (92\%) but the precision collapses to 58\%. ConRAD enforces recall as a hard constraint ($\ge 90\%$) and  treats precision as the optimization objective, achieving 94\% precision while strictly satisfying the risk budget.}
    \label{fig:hook}

\end{figure}

Knowledge graphs (KG) operate under the Open World Assumption (OWA), where missing edges represent unobserved facts rather than true negatives~\cite{hogan2021knowledge, abiteboul1995foundations}. Neural Graph Databases (NGDBs) aim to address this incompleteness by composing exact graph retrievals with probabilistic neural predictors~\cite{DBLP:journals/tmlr/Ren0ZLC24, DBLP:conf/log/BestaISODPCH22, DBLP:conf/vldb/Horchidan23, DBLP:conf/esws/HorchidanC23}. Recent foundation models~\cite{galkin2023towards, DBLP:conf/nips/GalkinZ00Z24} excel at answering complex queries over incomplete KGs. However, these models are trained to optimize pointwise accuracy on individual predictions~\cite{tabacof2020probability}, and while post-hoc calibration techniques can improve score reliability at the single-operator level~\cite{DBLP:conf/kdd/ZhaoK024}, no existing method provides formal correctness guarantees when such predictors are composed into multi-hop query plans. When inference-based operators form multi-hop execution pipelines, statistical errors compound non-linearly. An early false negative irreparably prunes valid subtrees, while a false positive floods downstream operators with cascading noise. In standard machine learning practice, practitioners mitigate uncertainty by manually tuning static prediction thresholds or top-$k$ limits.
While this heuristic approach may suffice for isolated predictions, it breaks down entirely under the compositionality of database queries. A globally tuned threshold cannot dynamically adapt to operator heterogeneity or local graph sparsity. By relying on brittle hyperparameter tuning rather than system-managed bounds, this approach effectively forces the application developer to act as the query optimizer, manually managing risk across complex query topologies without any mathematical assurances.

The consequences are severe in safety-critical deployments. Consider a drug safety monitoring application over a biomedical KG. A query such  as \emph{find all approved drugs that interact with proteins involved in pathways associated with cardiac events} requires multi-hop conjunctive reasoning across drug $\rightarrow$ protein, protein $\rightarrow$ pathway, and pathway $\rightarrow$ adverse effect relations. The underlying graph is inherently incomplete, as thousands of new biological interactions are published in the literature weekly~\cite{chandak2023building}. However, if uncalibrated thresholds cause the system to silently miss valid drug candidates, the entire screening pipeline becomes unreliable. Without formal statistical bounds on expected result completeness, adopting probabilistic neural predictors in such workflows is not viable~\cite{vayena2018machine}.





\begin{table}[t]
\caption{Summary of main notations.}
\label{tab:notations}
\vspace{-1em}
\renewcommand{\arraystretch}{0.9}
\small
\begin{tabularx}{\columnwidth}{l X}
\toprule
\textbf{Symbol} & \textbf{Description} \\ \midrule
$\hat{\mathcal{G}} = (\mathcal{V}, \mathcal{L}, \hat{\mathcal{E}})$ & Observed (incomplete) KG with factual triples $\hat{\mathcal{E}}$. \\
$\mathcal{G} = (\mathcal{V}, \mathcal{L}, \mathcal{E})$ & Complete ground-truth KG under OWA ($\hat{\mathcal{E}} \subseteq \mathcal{E})$. \\
$q = \omega_1 \circ \dots \circ \omega_k$ & Query $q$ as a composition of $k$ operators. \\ 
$\mathcal{T}$ & Query topology (e.g., \texttt{3p}, \texttt{2u}, \texttt{2ip}). \\
$\omega_i$ & The $i$-th operator in the query plan. \\
$f^{(i)}$ & Retrieval over observed triples $\hat{\mathcal{E}}$. \\ 
$\tilde{f}^{(i)}_{\lambda_i}$ & Stochastic inference with threshold $\lambda_i$. \\ 
$\bar{f}^{(i)}_{\lambda_i}$ & Conformal gate with threshold $\lambda_i$. \\
$\hat{Y}^{(i)}$ & Set of output entities produced by operator $\omega_i$. \\ 
$Y^{(i)}$ & Ground truth set (entities reachable in $\mathcal{G}$). \\ 
$\boldsymbol{\lambda} = [\lambda_1, \dots, \lambda_k]$ & Vector of thresholds for all operators in query $q$. \\ 
$\mathcal{R}(\boldsymbol{\lambda})$ & Expected end-to-end risk (e.g., FNR) under $\boldsymbol{\lambda}$. \\
$\mathcal{C}(\boldsymbol{\lambda})$ & Expected execution cost (e.g., cardinality) under $\boldsymbol{\lambda}$. \\
$\boldsymbol{\gamma} \in \Gamma$ & Scalarization strategy. \\
$\alpha$ & User-specified risk budget (e.g., maximum FNR). \\ 
$\eta \in [0,1]$ & Global scalar parameter for pipeline strictness. \\ 
$\delta \in (0,1)$ & Routing threshold in the conformal gate. \\\bottomrule
\end{tabularx}
\vspace{-2em}
\end{table}

We argue that correctness must transition from developer-tuned hyperparameters to a declarative constraint managed natively by the query engine. In a traditional database system, the user specifies a logical query and the optimizer selects the fastest valid physical plan. Similarly, a neural database should allow users to specify a risk budget, delegating the selection of the most precise execution plan to the system. To realize this vision, we introduce ConRAD (\underline{Con}formal \underline{R}isk-\underline{A}ware \underline{D}atabases). To our knowledge, ConRAD is the first framework to provide formal correctness guarantees over composed stochastic query plans in an NGDB. Given a recall target (e.g., risk budget $\alpha=0.1$, requiring 90\% recall), ConRAD automatically derives per-operator prediction thresholds that jointly satisfy the global risk budget while maximizing end-to-end precision. Our approach is grounded in Conformal Risk Control (CRC)~\cite{DBLP:conf/iclr/AngelopoulosBFL24}, which guarantees finite-sample statistical validity for black-box models without relying on unrealistic data distribution assumptions.

To illustrate this tension, consider the precision–recall trade-off depicted in Figure~\ref{fig:hook}. We execute a query over an incomplete KG where the ground truth contains 750 answers. Pure database retrieval yields perfect precision but recovers only 13\% of answers, because graph incompleteness prevents most valid paths from being explored. Replacing deterministic retrieval with uncalibrated neural link prediction recovers more answers but floods the pipeline with false positives. There, while recall incidentally reaches 92\%, the user must accept an unpredictable deterioration in result quality, as precision drops to  58\%. ConRAD bridges this gap by calibrating a hybrid approach that merges retrieval and inference to maximize precision (94\%) while satisfying a user-defined recall target ($\geq 90\%$).

Realizing this vision poses three technical challenges. First, standard CRC calibrates a scalar threshold for a single predictor, whereas NGDB query plans require coordinated calibration across multiple operators. Naively partitioning the global risk budget via the union bound ignores inter-operator correlations and is wasteful, as we demonstrate in Sec.~\ref{sec:method}. Second, the joint threshold space grows exponentially in the number of operators and is non-decomposable: each operator's threshold shifts the input distribution of every downstream operator, so independent tuning cannot guarantee the joint constraint. Per-query guarantees are unattainable with finite calibration data~\cite{foygel2021limits}. Third, neural inference is both computationally expensive and fundamentally unnecessary when the observed graph is locally complete. Invoking a neural model on dense, well-connected regions wastes resources and introduces false positives that degrade downstream precision. These challenges motivate the design of ConRAD. Our contributions are as follows:
\begin{enumerate}
\item We formalize the mapping of user-declared recall targets to per-operator threshold vectors as a constrained optimization problem, replacing ad-hoc threshold tuning with a system-managed calibration process grounded in finite-sample statistical guarantees. The formulation is agnostic to both the neural scoring model and the underlying graph store~(Sec.~\ref{sec:problem}).

\item We extend Conformal Risk Control from scalar to vector calibration over multi-operator query topologies via a \emph{quantile-space scalarization} that reduces the $k$-dimensional threshold space to a single monotonic parameter, while preserving the nested sets required by CRC~(Sec.~\ref{sec:vector-crc}).

\item We introduce \emph{the conformal gate}, a novel physical database operator that dynamically routes between retrieval-based execution and neural inference at each logical operator. By placing exact and neural evidence on a unified calibrated scale, the gate bypasses inference entirely in dense graph regions~(Sec.~\ref{sec:gate}).

\item We implement what is, to our knowledge, the first fully queryable neural graph database with formal correctness guarantees, integrating UltraQuery~\cite{DBLP:conf/nips/GalkinZ00Z24} with Neo4j~\footnote{https://neo4j.com/}. We evaluate on three benchmarks, three query topologies, and data incompleteness levels ranging from 5\% to 40\%. ConRAD satisfies recall targets across all settings (max. downward deviation: 0.046) and achieves precision matching or exceeding best-case static baselines that offer no guarantees. Furthermore, ConRAD reduces neural invocations by up to 100\% in well-connected regions under generous risk budgets~(Sec.~\ref{sec:experiments}).
\end{enumerate}



Although our evaluation centers on KGs, ConRAD's formal guarantees extend to any predictive pipeline that composes probabilistic operators into a directed acyclic graph, including retrieval-augmented generation (RAG)~\cite{lewis2020retrieval}, multi-stage retrieval~\cite{khattab2023dspy, patel2024semantic}, and learned database components~\cite{DBLP:conf/sigmod/KraskaBCDP18, DBLP:conf/cidr/KaranasosIPSPPX20, DBLP:conf/sigmod/ParkSBSIK22}. By transitioning correctness from a manually-tuned parameter to a declarative system constraint, ConRAD provides a principled blueprint for integrating machine learning inference into the broader data processing stack.

\section{Setting and Foundations}
\label{sec:overview}

ConRAD targets the execution of complex logical queries over incomplete KGs under OWA. Given a user query and a declarative recall target, ConRAD automatically derives calibrated per-operator thresholds that satisfy the guarantee while maximizing end-to-end precision. Below, we describe ConRAD's data model, supported queries, execution model, and the statistical foundations it relies on. Table~\ref{tab:notations} summarizes the notation used throughout the paper.

\subsection{Data Model and Supported Queries}
\label{sec:data-model}

We represent an observed KG as a directed, labeled graph $\hat{\mathcal{G}} = (\mathcal{V}, \mathcal{L}, \hat{\mathcal{E}})$, where $\mathcal{V}$ is the set of entities, $\mathcal{L}$ the set of relation types (edge labels), and $\hat{\mathcal{E}} \subseteq \mathcal{V} \times \mathcal{L} \times \mathcal{V}$ the set of observed triples. We adopt the standard KG triple model used by existing neural query answering frameworks~\cite{ren2020query2box, ren2020beta}. Node and edge properties, as found in property graph models, can be represented by encoding property values as entities connected via dedicated relation types. Under OWA, we assume a complete ground-truth graph $\mathcal{G} = (\mathcal{V}, \mathcal{L}, \mathcal{E})$ with $\hat{\mathcal{E}} \subseteq \mathcal{E}$, with $\mathcal{E} \setminus \hat{\mathcal{E}}$ representing true but unobserved facts.

ConRAD supports the Existential Positive First-Order (EPFO) fragment of first-order logic, supporting existential quantification ($\exists$), conjunction ($\land$), and disjunction ($\lor$)~\cite{ren2020query2box}. Queries are DAGs of operators where leaf nodes are anchor entities and the sink produces the answer set. Three operators map directly to EPFO primitives: \textit{projection} navigates a relation $r \in \mathcal{L}$ from a source set $S \subseteq \mathcal{V}$
to retrieve reachable entities; \textit{intersection} and \textit{union} perform set conjunction and disjunction over intermediate results.

We evaluate over three standard query topologies~\cite{ren2020query2box, ren2020beta, DBLP:conf/nips/GalkinZ00Z24} that isolate the primary composition primitives: (1) \texttt{3p} (three-hop projection), a chain that tests cascading error propagation (analogous to a join operation), defined as $h \xrightarrow{r_1} t_1 \xrightarrow{r_2} t_2 \xrightarrow{r_3}\;?$; (2) \texttt{2u} (two-hop union), a parallel topology that maximizes recall by design, defined as $(h_1 \xrightarrow{r_1}\;?) \cup (h_2 \xrightarrow{r_2}\;?)$; and (3) \texttt{2ip} (intersect-project), defined as $(h_1 \xrightarrow{r_1}\;?) \cap (h_2 \xrightarrow{r_2}\;?) \xrightarrow{r_3}\;?$, which merges independent evidence streams.

\noindent
\textbf{What ConRAD does not support.} ConRAD excludes negation and iterative (recursive) queries. Negation is semantically ill-defined under OWA~\cite{abiteboul1995foundations} and breaks the monotonicity of operator composition (Theorem~\ref{thm:monotonicity}), invalidating the nestedness property on which our recall guarantees depend (Sec.~\ref{sec:vector-crc}). Iterative queries are excluded because the threshold vector $\boldsymbol{\lambda}$ we propose in Sec.~\ref{sec:vector-crc} has fixed dimensionality determined by the query topology. Unbounded recursion would, therefore, require a variable-length threshold vector, which is incompatible with our offline calibration procedure.

\subsection{Execution Model}
\label{sec:execution-model}

A key design principle of ConRAD is that each logical operator is backed by two execution primitives: a deterministic \emph{retrieval} operator that traverses observed edges in $\hat{\mathcal{G}}$, and a stochastic \emph{inference} operator that scores candidate triples using a neural model. The conformal gate (Sec.~\ref{sec:gate}) dynamically composes these primitives.

\smallskip\noindent
\textit{Retrieval.} The retrieval operator $f^{(i)}$ executes an exact traversal over $\hat{\mathcal{E}}$. Given an input set $\hat{Y}^{(i-1)} \subseteq \mathcal{V}$ and a relation $r \in \mathcal{L}$:
\begin{equation}\label{eq:retrieval-op}   
f^{(i)}(\hat{Y}^{(i-1)}, r)   = \{t \in \mathcal{V} \mid \exists\, h \in \hat{Y}^{(i-1)},\;       (h, r, t) \in \hat{\mathcal{E}}\} 
\end{equation} 
By construction, every returned triple exists in $\hat{\mathcal{G}}$, guaranteeing zero false positives. However, recall is bounded by graph completeness.

\smallskip\noindent
\textit{Inference.} The inference operator $\tilde{f}^{(i)}_{\lambda_i}$ replaces exact traversal with a learned scoring function $\phi : \mathcal{V} \times \mathcal{L} \times \mathcal{V} \to [0,1]$ provided by a neural model. Given an input set $\hat{Y}^{(i-1)}$ and a relation $r$, the operator admits all candidate entities whose score exceeds a threshold $\lambda_i$:
\begin{equation}\label{eq:inference-op}
  \tilde{f}^{(i)}_{\lambda_i}(\hat{Y}^{(i-1)}, r)
  = \{t \in \mathcal{V} \mid \exists\, h \in \hat{Y}^{(i-1)},\;
      \phi(h, r, t) \geq \lambda_i\}
\end{equation}
The threshold $\lambda_i$ governs the precision–recall trade-off at each operator: lowering $\lambda_i$ recovers missing answers but potentially floods downstream operators with false positives.

\noindent
\textbf{The threshold selection problem.} In practice, per-operator thresholds are typically selected manually, following the standard workflow of deploying ML models with score-based filtering. This heuristic approach offers no runtime guarantees and is inadequate for composed query plans. Neural scores are often uncalibrated (i.e., a softmax score of 0.9 does not imply 90\% correctness), their distributions shift across operators and graph neighborhoods, and errors compound non-linearly. The resulting optimization problem is inherently coupled: each operator's threshold affects the input distribution of every downstream operator, making independent per-operator tuning insufficient. ConRAD eliminates this burden. Given a query $q = \omega_1 \circ \cdots \circ \omega_k$ represented as a DAG of $k$ operators, ConRAD automatically derives a threshold vector $\boldsymbol{\lambda} = [\lambda_1, \dots, \lambda_k]$ that satisfies a user-declared recall target while maximizing precision. \looseness=-1


\subsection{Statistical Foundations}
\label{sec:stat-foundations}

ConRAD's calibration procedure builds on Conformal prediction methods, which we briefly review in this section.


\subsubsection{Standard Split Conformal Prediction.} Intuitively, the Conformal Prediction (CP) framework calibrates a model by measuring how unusual new data points are compared to a reference set~\cite{vovk2005algorithmic, DBLP:journals/jmlr/ShaferV08, DBLP:journals/ftml/AngelopoulosB23}. In its split variant, CP avoids model retraining by partitioning the data into a held-out calibration set.


Consider a calibration dataset of $n$ exchangeable tuples, $\mathcal{D}_{\text{cal}} = \{(x_i, y_i)\}_{i=1}^n$. We first define a non-conformity score $s(x, y) \in \mathbb{R}$, which quantifies the disagreement between a tuple's features $x$ and a proposed label $y$ from a finite label set $\mathcal{Y}$. For a learned filtering predicate, this could simply be, for instance, one minus the neural confidence, $1 - \phi(x, y)$. Standard Conformal Prediction computes the empirical quantile $\hat{q}$ of these non-conformity scores on $\mathcal{D}_{\text{cal}}$ at the level $\lceil (n+1)(1-\alpha) \rceil / n$, where $\alpha$ is the user-defined risk budget. For any new, unseen test tuple $(x_{n+1}, y_{n+1})$ assumed to be exchangeable with the calibration tuples, where $x_{n+1}$ is observed and $y_{n+1}$ is the unknown ground truth, we construct a prediction set $C(x_{n+1})$ containing all possible labels whose non-conformity scores fall below this threshold:
\begin{equation}
C(x_{n+1}) = \{ y \in \mathcal{Y} \mid s(x_{n+1}, y) \leq \hat{q} \}
\end{equation}

This procedure provides a \textit{marginal coverage guarantee}: the true outcome is guaranteed to be in the prediction set with high probability, formally $\mathbb{P}(y_{n+1} \in C(x_{n+1})) \geq 1 - \alpha$. However, standard CP controls only miscoverage (0-1 loss). Database workloads require guarantees over aggregate, set-valued metrics (e.g., recall, False Negative Rate).


\subsubsection{Conformal Risk Control.} To accommodate these system-level metrics, Conformal Risk Control (CRC)~\cite{DBLP:conf/iclr/AngelopoulosBFL24} generalizes the standard CP framework to bound arbitrary loss functions $L(C_\lambda(x), y) \in [0, B]$. CRC introduces a tunable parameter $\lambda$ that dictates the "strictness" or size of the prediction set $C_\lambda(x)$. Crucially, the chosen loss function must be non-increasing with respect to $\lambda$. For example, as a selection operator becomes more permissive (larger $\lambda$), the False Negative Rate stays the same or decreases. The goal of CRC is to identify a threshold $\hat{\lambda}$ that guarantees the expected risk (e.g., FNR) remains strictly below a user-defined risk budget $\alpha$. CRC achieves this by finding the tightest parameter that satisfies a corrected empirical risk bound on the calibration set:
\begin{equation} \label{eq:crc}
\hat{\lambda} = \inf \{ \lambda : \frac{n}{n+1} \widehat{R}(\lambda) + \frac{B}{n+1} \leq \alpha \}
\end{equation}
where $\widehat{R}(\lambda) = \frac{1}{n} \sum_{i=1}^n L(C_\lambda(x_i), y_i)$ is the empirical risk over $\mathcal{D}_{\text{cal}}$ and $B$ is an upper bound of the loss. The correction terms account for the finite size of the calibration data, guaranteeing that the expected risk on unseen queries remains bounded: 
\begin{equation}
\mathbb{E}[L(C_{\hat{\lambda}}(x_{n+1}), y_{n+1})] \leq \alpha
\end{equation}

CRC calibrates a single operator. Extending it to composed query plans without massive precision degradation is the core challenge we address in Sec.~\ref{sec:vector-crc}.

\section{Problem Formulation}\label{sec:problem}

\begin{figure}[t]
    \centering
    \includegraphics[width=1\linewidth]{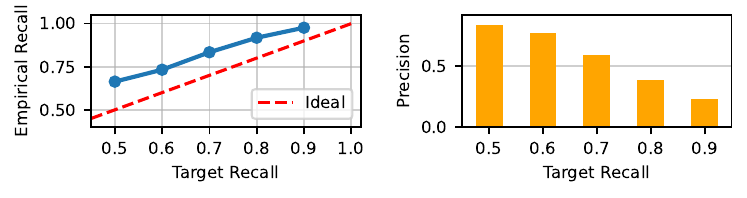}
    \caption{The cost of independent calibration on a 3-hop query (FB15k-237). (Left) The naive union bound approach consistently overshoots target recall, underutilizing the global risk budget. (Right) Consequently, excessively permissive thresholds admit cascading false positives, plunging end-to-end precision below 40\% for all target recalls.}
    \label{fig:ub}
    \vspace{-1.5em}
\end{figure}

We formalize the execution of EPFO queries over a Neural Graph Database as a constrained optimization problem, replacing ad-hoc threshold tuning with a system-managed calibration process.
Let $q = \omega_1 \circ \cdots \circ \omega_k$ be a query represented as a DAG of $k$ operators. The execution of the query is governed by a threshold vector $\boldsymbol{\lambda} = [\lambda_1, \dots, \lambda_k] \in [0, 1]^k$, where each $\lambda_i$ is assigned to a distinct inference operator $\tilde{f}^{(i)}$ in the query DAG (under any fixed logical ordering). To ensure a uniform threshold interface, we assume each inference operator $\tilde{f}^{(i)}$ produces a normalized score in the range $[0, 1]$ (e.g., via a softmax or sigmoid activation). We assume queries are drawn from a distribution $\mathcal{D}$. We define the query-level ground truth $Y^{(k)} \subseteq \mathcal{V}$ as the set of entities reachable over the complete graph $\mathcal{G}$, and denote by $\hat{Y}^{(k)}_{\boldsymbol{\lambda}}$ the set of entities retrieved by the final operator in the pipeline under threshold vector $\boldsymbol{\lambda}$.

\noindent \textbf{Risk Constraint.}  We define the risk function $\mathcal{R}(\boldsymbol{\lambda})$ as the expected end-to-end False Negative Rate (FNR) over $\mathcal{D}$. A threshold vector $\boldsymbol{\lambda}$ is valid if and only if the expected FNR remains below a user-specified risk budget $\alpha \in (0, 1)$:
%
\begin{equation}
\label{eq:risk}
\mathcal{R}(\boldsymbol{\lambda}) = \mathbb{E}\left[1 - \frac{|\hat{Y}^{(k)}_{\boldsymbol{\lambda}} \cap Y^{(k)}|}{|Y^{(k)}|}\right] \leq \alpha
\end{equation}


In a query topology of monotonic operations (selections, projections, joins), errors are asymmetric. A false positive adds downstream overhead but can be filtered by subsequent operators. A false negative is irrecoverable: once a valid tuple is pruned, no downstream join can restore it. By treating FNR as a hard constraint, ConRAD preserves the database expectation of completeness while delegating precision maximization to the cost objective. Equivalently, bounding $\text{FNR} \leq \alpha$ guarantees recall
$\geq 1 - \alpha$. We use both formulations interchangeably, referring to $\alpha$ as the \emph{risk budget} and to $1 - \alpha$ as the \emph{recall target}.


\noindent \textbf{Optimization Objective.} Among all valid threshold vectors $\boldsymbol{\lambda}$, the optimizer seeks the plan that maximizes result quality. We minimize expected cardinality of the final answer set:
%
\begin{equation}
  \mathcal{C}(\boldsymbol{\lambda}) = \mathbb{E}\left[|\hat{Y}^{(k)}_{\boldsymbol{\lambda}}|\right]  
\end{equation}

Since the risk constraint already guarantees sufficient true positive coverage, minimizing $|\hat{Y}^{(k)}_{\boldsymbol{\lambda}}|$ directly minimizes false positives, maximizing end-to-end precision.



\noindent \textbf{Risk-Constrained Optimization.} The objective of our optimization framework is to identify the threshold vector $\boldsymbol{\lambda}$ that minimizes $\mathcal{C}$ without violating the risk constraint. Formally:
\begin{equation}
\label{eq:objective}
\begin{aligned}
\min_{\boldsymbol{\lambda} \in [0,1]^k} \quad & \mathcal{C}(\boldsymbol{\lambda}) \\
\text{subject to} \quad & \mathcal{R}(\boldsymbol{\lambda}) \leq \alpha, \quad \alpha \in (0, 1)
\end{aligned}
\end{equation}

This formulation differs from classical cost-based query optimization, where all physical plans produce correct and complete results and the optimizer aims to minimize the execution time. In our setting, the answer set itself is a function of $\boldsymbol{\lambda}$: the optimizer must navigate the continuous trade-off between selectivity and coverage, finding
the tightest threshold vector that still satisfies the recall target. We treat the risk constraint as hard and non-negotiable, backed by finite-sample statistical validity; the cost objective is best-effort.


\section{The ConRAD Framework}
\label{sec:method}


We first show that independent per-operator calibration is provably wasteful, then introduce a scalarization strategy that reduces the $k$-dimensional threshold search to a single parameter, and finally design the conformal gate operator that dynamically routes between retrieval and inference.

\noindent \textit{The Pessimism of Independent Calibration.}
A naive baseline for calibrating a $k$-operator EPFO query allocates the global risk budget $\alpha$ equally, assigning $\alpha/k$ to each operator and bounding the total risk via the union bound ($\mathcal{R}(\lambda) \le \sum_{i=1}^{k} \alpha/k$). While statistically valid, the union bound assumes worst-case dependence between operators, ignoring the correlations between operators. The effect is systematic over-allocation, as each operator receives a more generous threshold than necessary, and the surplus false positives compound across hops. Figure~\ref{fig:ub} illustrates this for a \texttt{3p} query topology on FB15k-237~\cite{DBLP:conf/nips/BordesUGWY13}. At a 50\% recall target, independent calibration overshoots to 0.66 empirical recall, demonstrating an inability to constrain cardinality growth. At a 90\% target, the thresholds provide no meaningful pruning: empirical recall reaches 0.976 while precision collapses to 0.23. This motivates the joint, pipeline-aware calibration strategy we propose next.

\begin{figure*}[t]
    \centering
    \begin{minipage}[t]{0.66\textwidth}
        \centering
        \includegraphics[width=\textwidth]{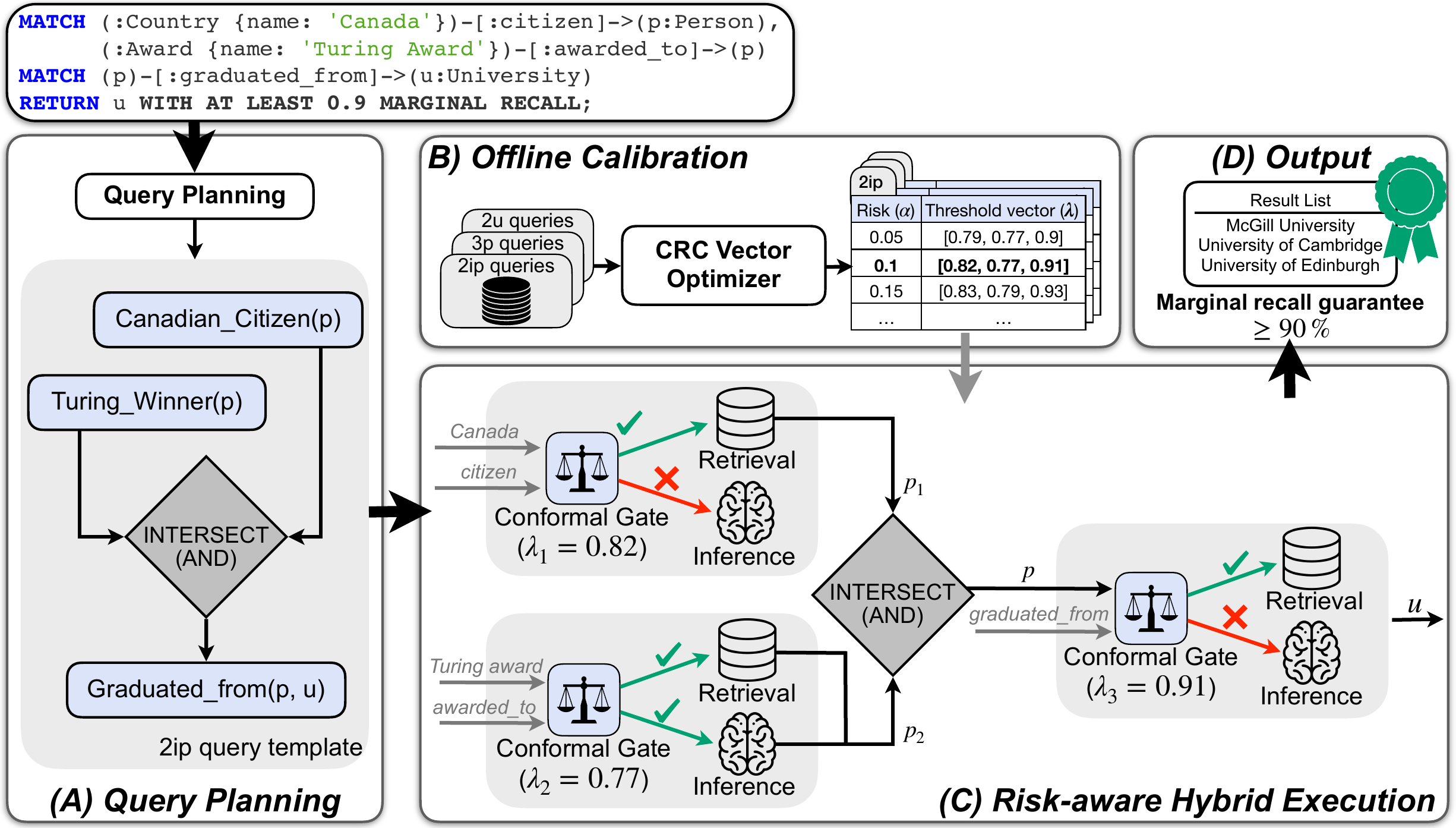}
        \caption{ConRAD overview.} 
        \label{fig:overview}
    \end{minipage}%
    \hfill
    \begin{minipage}[t]{0.33\textwidth}
        \centering
        \includegraphics[width=\textwidth]{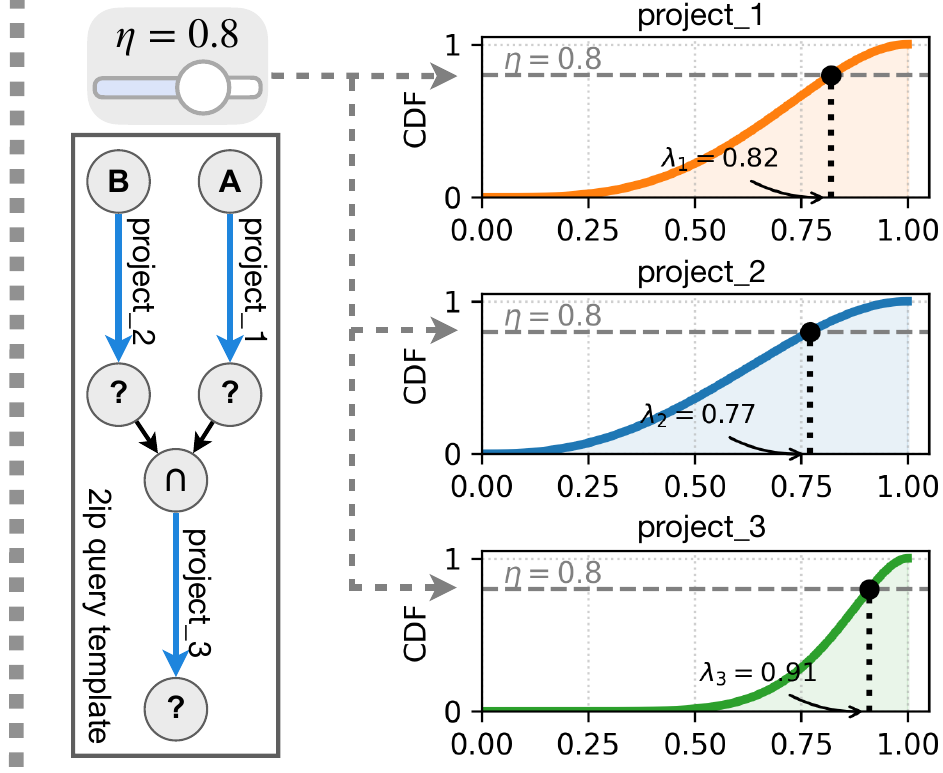}
        \caption{\texttt{2ip} scalarization example.}
        \label{fig:optimizer}

    \end{minipage}
    \vspace{-1em}
\end{figure*}

\subsection{Conformal Calibration for Composed Plans}
\label{sec:vector-crc}

Calibrating $k$ thresholds jointly is both expensive and theoretically incompatible with CP. A grid search over $[0,1]^k$ is intractable for any practical optimizer. Furthermore, CRC requires that lowering a threshold can only grow the result set (nestedness), but moving in multiple dimensions at once can shrink one operator's output while expanding another's, breaking this property. Our solution is to collapse the search to one dimension: we derive the full threshold vector $\boldsymbol{\lambda}$ from a single scalar parameter $\eta \in [0,1]$.


\subsubsection{Scalarization Strategies}
\label{sec:scalarization} We define a scalarization as a mapping $\mathcal{M}: [0,1] \to [0,1]^k$ that derives the threshold vector $\boldsymbol{\lambda}$ from a single global parameter $\eta$. Any component-wise monotonic mapping preserves the nestedness required by CRC, making it statistically valid. A key challenge is that different operators produce confidence scores on incomparable scales. For instance, a score of 0.7 may represent high confidence for one predictor and near-random guessing at another. To resolve this, our scalarization operates in quantile space. Figure~\ref{fig:optimizer} illustrates this on a \texttt{2ip} topology. A single global parameter $\eta = 0.8$ is routed through each operator's empirical quantile function $\hat{Q}_j$, derived from the calibration set, producing per-operator thresholds ($\lambda_1 = 0.82$, $\lambda_2 = 0.77$, $\lambda_3 = 0.91$) that normalize across incomparable score distributions. Each threshold can further be scaled by a per-operator exponent $\gamma_j > 0$:
\begin{equation} \label{eq:scalarization}
   \lambda_j(\eta) = \hat{Q}_j(\eta^{\gamma_j}), \quad j = 1, \ldots, k 
\end{equation}

The quantile function $\hat{Q}_j$ normalizes across operators with different score distributions, ensuring consistent selectivity regardless of raw score range. The exponent $\gamma_j$ controls how aggressively the $j$-th operator tightens relative to others: a \emph{loose-early, tight-late} $\boldsymbol{\gamma}$ can preserve recall at initial hops to maximize final precision, while a \emph{tight-early, loose-late} $\boldsymbol{\gamma}$ can aggressively prune intermediate cardinalities, analogous to predicate pushdown. Since any strictly positive $\boldsymbol{\gamma}$ yields a component-wise monotonic mapping, all such choices preserve nestedness and are valid under the conformal prediction framework, as we formally prove below.


\subsubsection{Theoretical Validity} 
\label{sec:validity}

We prove that the nested prediction sets required by CRC
hold globally across composed query plans for any monotonic
scalarization.
\begin{assumption}[Pointwise Scoring]
\label{asm:pointwise}
The inference operator $\tilde{f}^{(i)}_{\lambda_i}$ evaluates each candidate triple $(h, r, t)$ independently. The inclusion of entity $t$ depends solely on whether $\phi(h, r, t) \geq \lambda_i$.
\end{assumption}

This assumption holds for most standard KG scoring functions~\cite{bordes2013translating, trouillon2016complex, sun2019rotate, DBLP:conf/nips/GalkinZ00Z24}, which compute $\phi(h, r, t)$ as a function of the triple alone. Each inference operator thus evaluates a per-tuple selection predicate. A candidate is admitted if and only if its score exceeds $\lambda_i$, independently of the scores assigned to other candidates. This holds by construction for all threshold-based operators but not for rank-based selections (e.g., top-$k$).


\begin{theorem}[Global Monotonicity] \label{thm:monotonicity}
Let $q$ be an EPFO query satisfying Assumption~\ref{asm:pointwise}. If the configuration $\boldsymbol{\lambda}(\eta)$ is determined by any component-wise non-decreasing scalarization $\mathcal{M}$, then for any
$\eta_a < \eta_b$:
$$\hat{Y}^{(k)}_{\boldsymbol{\lambda}(\eta_b)} \subseteq
  \hat{Y}^{(k)}_{\boldsymbol{\lambda}(\eta_a)}$$
  
Consequently, $\mathcal{R}(\boldsymbol{\lambda})$ is
non-decreasing in $\eta$.
\end{theorem}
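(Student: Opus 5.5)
The proof is a structural induction over the query DAG. It shows that every intermediate output set is antitone in the threshold vector under the component-wise order. Monotonicity in $\eta$ then follows by composing with the scalarization.

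\textbf{Step 1: reduce to the threshold order.} Fix $\eta_a < \eta_b$. Since $\mathcal{M}$ is component-wise non-decreasing, $\lambda_i(\eta_a) \le \lambda_i(\eta_b)$ for every $i$. For the quantile scalarization of Eq.~\eqref{eq:scalarization} this holds because $\eta \mapsto \eta^{\gamma_j}$ is increasing for $\gamma_j>0$ and $\hat{Q}_j$ is non-decreasing. It therefore suffices to prove the following claim. For any two vectors $\boldsymbol{\lambda} \le \boldsymbol{\lambda}'$ (component-wise), every operator node $v$ of the DAG satisfies $\hat{Y}^{(v)}_{\boldsymbol{\lambda}'} \subseteq \hat{Y}^{(v)}_{\boldsymbol{\lambda}}$.

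\textbf{Step 2: induction in topological order.} The base case covers leaves (anchor entities), which do not depend on $\boldsymbol{\lambda}$. For an inference projection node $i$, assume inductively that its input satisfies $S' := \hat{Y}^{(i-1)}_{\boldsymbol{\lambda}'} \subseteq S := \hat{Y}^{(i-1)}_{\boldsymbol{\lambda}}$. Take $t \in \tilde{f}^{(i)}_{\lambda'_i}(S', r)$, witnessed by some $h \in S'$ with $\phi(h,r,t) \ge \lambda'_i$. By Assumption~\ref{asm:pointwise}, the score $\phi(h,r,t)$ depends only on the triple, not on the other members of the input set. Hence $h \in S$ and $\phi(h,r,t) \ge \lambda'_i \ge \lambda_i$, so $t \in \tilde{f}^{(i)}_{\lambda_i}(S, r)$. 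Retrieval projections (Eq.~\eqref{eq:retrieval-op}) are monotone in the input set and independent of the threshold, so the same argument applies without the score step. Intersection and union nodes are monotone in each argument: $A' \subseteq A$ and $B' \subseteq B$ give $A' \cap B' \subseteq A \cap B$ and $A' \cup B' \subseteq A \cup B$. Applying the claim at the sink yields the stated inclusion $\hat{Y}^{(k)}_{\boldsymbol{\lambda}(\eta_b)} \subseteq \hat{Y}^{(k)}_{\boldsymbol{\lambda}(\eta_a)}$. The absence of negation is exactly what makes this step go through, since set difference is antitone in its second argument.

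\textbf{Step 3: risk.} The ground truth $Y^{(k)}$ does not depend on $\boldsymbol{\lambda}$. So for each query, $|\hat{Y}^{(k)}_{\boldsymbol{\lambda}(\eta)} \cap Y^{(k)}|$ is non-increasing in $\eta$, and the per-query FNR loss is non-decreasing in $\eta$. Taking the expectation over $\mathcal{D}$ preserves the pointwise inequality, so $\mathcal{R}(\boldsymbol{\lambda}(\eta))$ is non-decreasing. One caveat: queries with $Y^{(k)}=\emptyset$ need a convention, for example loss $0$, and the proof should state it.

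\textbf{Main obstacle.} The delicate point is the projection step, because the input set itself changes with $\boldsymbol{\lambda}$. One must check that the existential witness $h$ survives under the looser thresholds, which is where the inductive hypothesis on upstream nodes is used. It is also where pointwise scoring is essential: under a top-$k$ rule, a smaller input could promote a new candidate and break the inclusion. The gate operator $\bar{f}$ is not yet defined at this point in the paper, so the proof is confined to retrieval and inference operators. If the gate is later treated as a threshold predicate on a calibrated score, it can be covered by the same argument.
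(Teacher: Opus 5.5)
Your proof is correct and follows essentially the same route as the paper: induction over the query DAG in topological order, using the input-monotonicity of projection, intersection and union together with threshold-monotonicity from Assumption~\ref{asm:pointwise}, and then deriving risk monotonicity from the fixed ground truth $Y^{(k)}$. Your witness-based projection step merges the paper's two separate observations (monotone in the input set, and shrinking under a stricter threshold) into a single argument. Your convention for queries with $Y^{(k)}=\emptyset$ is a minor detail that the paper leaves implicit.
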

\begin{proof}
By induction over the DAG in topological order.

\noindent \emph{Base case:} For leaf operators, the input is a fixed anchor set independent of $\eta$. Since $\mathcal{M}$ is component-wise non-decreasing, $\eta_a < \eta_b$ implies $\lambda_j(\eta_a) \leq \lambda_j(\eta_b)$ for all $j$. By Assumption 1, raising a threshold can only shrink a neural operator's output, so $\tilde{f}^{(j)}_{\lambda_j(\eta_b)}(S, r) \subseteq \tilde{f}^{(j)}_{\lambda_j(\eta_a)}(S, r)$ for any fixed input set $S$.

\noindent \emph{Inductive step:}  Assume the subset property holds for all predecessors of $\omega_i$ in topological order. Every EPFO operator is input-monotone: if $A \subseteq B$ then $\omega_i(A) \subseteq \omega_i(B)$, which holds for projection (fewer sources yield fewer targets), intersection ($A_j \subseteq B_j$ implies $\bigcap A_j \subseteq \bigcap B_j$), and union ($A_j \subseteq B_j$ implies $\bigcup A_j \subseteq \bigcup B_j$). For projection operators, the stricter threshold further shrinks the output. Combined, $\hat{Y}^{(i)}_{\boldsymbol{\lambda}(\eta_b)} \subseteq \hat{Y}^{(i)}_{\boldsymbol{\lambda}(\eta_a)}$ holds at stage $i$, and by induction to the final output $\hat{Y}^{(k)}$.

\noindent Risk monotonicity follows: since the ground truth $Y^{(k)}$ is fixed, a smaller prediction set can only capture fewer true positives, so $\mathcal{R}(\boldsymbol{\lambda})$ is non-decreasing in $\eta$.
\end{proof}

This nestedness enables tractable calibration. CRC performs a monotone search over $\eta$ to find the tightest threshold vector satisfying $\mathcal{R}(\boldsymbol{\lambda}) \leq \alpha$, with the finite-sample guarantee of Eq.~\ref{eq:crc} applying directly. All scalarizations in the family of Sec.~\ref{sec:scalarization} are valid by construction. Importantly, the search finds the optimum along a given trajectory but not necessarily the global minimum of $\mathcal{C}(\boldsymbol{\lambda})$. This relaxation is a necessary trade-off for tractability, analogous to heuristic plan-space pruning in classical query optimizers.



\begin{figure}[t]
    \centering
    \includegraphics[width=\linewidth]{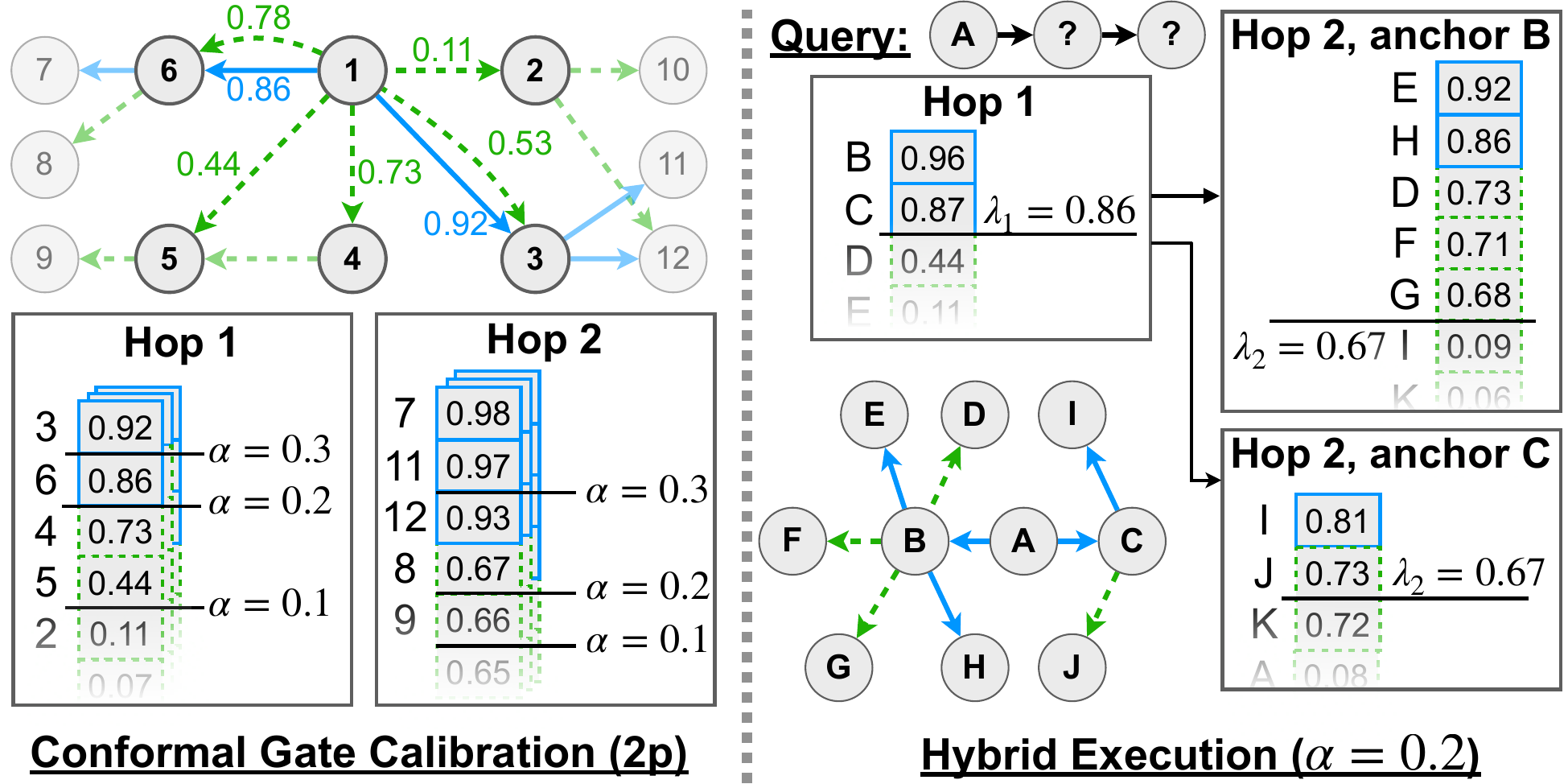}
    \caption{The conformal gate on a two-hop query topology. \textbf{Left:} Calibration maps risk budgets to per-operator thresholds; tighter budgets force inclusion of uncertain neural predictions (green). \textbf{Right:} Online execution at $\alpha = 0.2$ demonstrates adaptive routing. Node A's neighborhood is dense enough for retrieval-only execution (blue), while nodes B and C require hybrid mode to meet the recall target.}
    \label{fig:operator}
\end{figure}

\subsection{The Conformal Gate}
\label{sec:gate}
Invoking a neural model at every hop is expensive and injects unnecessary false positives. We therefore map each logical operator $\omega_i$ to a \emph{conformal gate}, a physical operator that dynamically routes between deterministic retrieval $f^{(i)}$ and neural inference $\tilde{f}^{(i)}_{\lambda_i}$ based on the calibrated threshold $\lambda_i$.



For the calibrated threshold $\lambda_i$ to act as a universal gate, retrieval results (correct by construction) and neural predictions (uncertain) must be scored on a single comparable scale. A naive approach assigning confidence $1.0$ to all retrieval facts creates a mass of tied scores, preventing the calibrator from selecting fine-grained thresholds, as it must either accept or reject all retrieval facts at once. We resolve this by partitioning the score space $[0, 1]$ at a
routing threshold $\delta \in (0, 1)$:
\begin{equation}
\label{eq:unified-score}
S(\tau) =
\begin{cases}
\delta + (1 - \delta) \cdot \mathcal{H}(\tau) & \text{if } \tau \in f^{(i)}(\hat{\mathcal{G}}) \\
\phi(\tau) \cdot (\delta - \epsilon) & \text{otherwise} 
\end{cases}
\end{equation}
where $\tau = (h, r, t)$ is a candidate triple, $\epsilon > 0$ is a negligible numerical margin, and $\mathcal{H}$ is a consistent hash function mapping triples to random scalar values in $[0,1]$. Retrieval facts receive scores in $[\delta, 1]$, while neural predictions are strictly bounded to $[0, \delta - \epsilon)$. This guarantees that the calibrator prioritizes explicit facts over uncertain predictions. $\mathcal{H}$ spreads retrieval scores uniformly across the upper interval, breaking ties and allowing the calibrator to select thresholds at arbitrary granularity, similar to tie-breaking mechanisms used in standard CP methods~\cite{vovk2005algorithmic, DBLP:journals/ftml/AngelopoulosB23}. The tie-breaking enables subsampling within the retrieval-only regime to prevent downstream cardinality blowup. Because CRC calibrates over the joint scores, the choice of $\delta$ does not affect accept/reject behavior.

At runtime, the relationship between the calibrated threshold $\lambda_i$ and the routing threshold $\delta$ determines the gate's execution mode:

\noindent \underline{\textit{Retrieval-only ($\lambda_i \geq \delta$):}} The recall target is satisfied by $\hat{\mathcal{G}}$. The gate executes only $f^{(i)}$ and subsamples retrieval facts where $S(\tau) \geq \lambda_i$, pruning intermediate cardinalities with zero inference cost.

\noindent \underline{\textit{Hybrid ($\lambda_i < \delta$):}} Local graph density is insufficient. The gate returns all retrieval facts and invokes $\tilde{f}^{(i)}_{\lambda_i}$ to recover missing answers.

Figure~\ref{fig:operator} illustrates both modes on a two-hop query topology. At $\alpha = 0.2$, hop 1 calibrates $\lambda_1 = 0.86$: nodes B and C are admitted via retrieval alone. At hop 2, the gate routes adaptively per anchor. Node A's neighborhood is dense enough for retrieval-only execution, while nodes B and C require hybrid mode to meet the recall target.


\section{Implementation}
\label{sec:framework}

We now describe the end-to-end implementation of ConRAD, bridging the theoretical guarantees of Sec.~\ref{sec:method} with a practical query execution engine. ConRAD is implemented as a middleware layer coordinating a deterministic graph store (exposing a standard traversal API) with an external neural scoring service ($\phi(h, r, t) \to [0, 1]$); either component can be swapped independently. Figure~\ref{fig:overview} illustrates the workflow. A user submits a query alongside a declarative recall target. The system (A) parses the query into a logical DAG and maps it to a known query topology, (B) retrieves the pre-computed calibration parameters for that topology and risk budget $\alpha$, (C) compiles the logical plan into a physical DAG of conformal gates, and (D) returns the result set with a marginal recall guarantee. We detail each stage below, beginning with the offline calibration process that provides the statistical foundation for runtime execution.

\begin{algorithm}[t]
\caption{Calibration Phase (Offline)}
\label{alg:calibration}
\begin{algorithmic}[1]
\Require Risk budget $\alpha$, calibration queries $\mathcal{D}_\text{cal}$, query topology $\mathcal{T}$ with $k$ inference operators, routing threshold $\delta$, candidate strategies $\Gamma$, cardinality objective $\mathcal{C}$
\Ensure Calibrated threshold vector $\hat{\boldsymbol{\lambda}}$
\State Split $\mathcal{D}_\text{cal}$ into disjoint subsets $\mathcal{D}_\text{opt}$ and $\mathcal{D}_\text{valid}$ 
\State $n \gets |\mathcal{D}_\text{opt}|$
\For{each operator $j \gets 1$ to $k$}
    \State Compute scores on $\mathcal{D}_\text{opt}$ \Comment{Eq.~\ref{eq:unified-score}}
    \State Compute empirical quantile function $\hat{Q}_j$
\EndFor
\State $\text{best\_c} \gets \infty$
\For{each strategy $\boldsymbol{\gamma} \in \Gamma$} \Comment{Evaluate scalarization strategies}
    \For{each $\eta$ in a discretized grid over $[0, 1]$}
        \State $\boldsymbol{\lambda} \gets [\hat{Q}_j(\eta^{\gamma_j})]_{j=1}^k$ \Comment{Scalarization (Eq.~\ref{eq:scalarization})}
        \For{each $(q_i, Y_i) \in \mathcal{D}_\text{opt}$}
            \State Execute $q_i$ under $\boldsymbol{\lambda}$ to obtain $\hat{Y}^{(k)}_{i,\boldsymbol{\lambda}}$
            \State $L_i \gets 1 - |\hat{Y}^{(k)}_{i,\boldsymbol{\lambda}} \cap Y_i| \;/\; |Y_i|$ \Comment{Per-query FNR}
        \EndFor
        \State $\hat{\mathcal{R}}_\eta \gets \frac{1}{n} \sum L_i$
    \EndFor
\State $\hat{\eta} \gets \sup\left\{\eta : \frac{n}{n+1}\hat{\mathcal{R}}_\eta + \frac{B}{n+1} \leq \alpha\right\}$ \Comment{CRC on $\mathcal{D}_\text{opt}$}

\State $\boldsymbol{\lambda}^* \gets [\hat{Q}_j(\hat{\eta}^{\gamma_j})]_{j=1}^k$
    
    \State $c \gets \text{Evaluate } \mathcal{C}(\boldsymbol{\lambda}^*) \text{ over } \mathcal{D}_\text{valid}$

    \If{$c < \text{best\_c}$}
        \State $\hat{\boldsymbol{\lambda}} \gets \boldsymbol{\lambda}^*$; \quad $\text{best\_c} \gets c$
    \EndIf
\EndFor
\State \Return $\hat{\boldsymbol{\lambda}}$
\end{algorithmic}
\end{algorithm}

\subsection{Offline Calibration and Runtime Execution} \label{sec:runtime}
For each query topology $\mathcal{T}$ and risk budget $\alpha$, the system determines the threshold vector $\hat{\boldsymbol{\lambda}}$ that satisfies the recall target while minimizing result-set cardinality $\mathcal{C}$. This phase runs entirely offline, decoupled from the query critical path. Algorithm~\ref{alg:calibration} details the procedure, which proceeds in three stages over a calibration dataset $\mathcal{D}_\text{cal}$ partitioned into disjoint optimization ($\mathcal{D}_\text{opt}$) and evaluation ($\mathcal{D}_\text{valid}$) sets. First, for each inference operator $j$ in the topology, we execute the queries in $\mathcal{D}_\text{opt}$ over the incomplete graph $\hat{\mathcal{G}}$ and collect both retrieval and inference scores, unified via $S(\tau)$ (Eq.~\ref{eq:unified-score}). The routing threshold $\delta$ is fixed across calibration and online execution, ensuring that $\boldsymbol{\lambda}$ induces identical gate behavior in both phases. From these scores, we compute the empirical quantile function $\hat{Q}_j$ for each operator. Second, we evaluate each candidate scalarization strategy $\boldsymbol{\gamma} \in \Gamma$ (Sec.~\ref{sec:scalarization}), a small predefined set of per-operator exponent vectors, over a discretized grid of 100 values $\eta \in [0, 1]$. At each grid point, the scalarization (Eq.~\ref{eq:scalarization}) maps $\eta$ to a threshold vector $\boldsymbol{\lambda}$, and the full pipeline is executed over $\mathcal{D}_\text{opt}$ to compute the empirical FNR $\hat{\mathcal{R}}_\eta$. The CRC correction (Eq.~\ref{eq:crc}) then identifies the largest $\eta$ whose corrected risk remains within budget, yielding a valid threshold vector $\boldsymbol{\lambda}^*$ for that trajectory. Third, each valid $\boldsymbol{\lambda}^*$ is evaluated on the held-out set $\mathcal{D}_\text{valid}$, and the system retains the threshold vector achieving the lowest cardinality $\mathcal{C}(\boldsymbol{\lambda}^*)$. The resulting $\hat{\boldsymbol{\lambda}}$ is stored in a lookup table mapping $(\mathcal{T}, \alpha) \to \hat{\boldsymbol{\lambda}}$, as shown in Figure~\ref{fig:overview}~(B). This table is computed once per dataset and query topology. Like updating statistics in a traditional DBMS, recalibration is needed only when the graph structure or model weights shift enough to alter score distributions. We quantify this cost in Sec.~\ref{sec:experiments} (Table~\ref{tab:calibration_times}).

At query time, the planner retrieves the calibrated threshold vector $\hat{\boldsymbol{\lambda}}$ for the given query topology $\mathcal{T}$ and risk budget $\alpha$. The logical DAG is compiled into a physical plan of conformal gates $\bar{f}_{\lambda_{1}}^{(1)}, \ldots, \bar{f}_{\lambda_{k}}^{(k)}$, evaluated in topological order. Each gate compares its operator threshold $\lambda_i$ against the routing threshold $\delta$ to select its execution mode: retrieval-only when local graph evidence suffices, or hybrid when inference is needed to meet the recall target. Intersection and union nodes apply deterministic set operations and introduce no additional uncertainty. The final output $\hat{Y}^{(k)}_{\hat{\boldsymbol{\lambda}}}$ satisfies a marginal guarantee that the expected end-to-end recall remains above $1-\alpha$.

\subsection{Practical considerations} 
\noindent \textbf{Optimization Tractability.} Our formulation seeks the tightest threshold vector that satisfies the recall target. To achieve this, the quantile-space scalarization reduces the $k$-dimensional threshold space to a one-dimensional search over $\eta$, enabling tractable calibration with finite-sample guarantees. However, this is not guaranteed to be the global optimum over the full $k$-dimensional space, as the scalarization constrains the search to a single monotonic path. An unconstrained search could theoretically find threshold vectors yielding smaller prediction sets at the same recall level by tightening one hop more aggressively while loosening another. While advanced search techniques, such as Bayesian optimization or gradient-based searches, could theoretically explore this high-dimensional space, they face two critical hurdles. First, evaluating arbitrary threshold vectors incurs prohibitive computational costs. Second, and more fundamentally, unconstrained multi-dimensional tuning violates the strict nestedness property required by CRC. We view this as an inherent tradeoff between statistical validity and optimization power; the scalarization sacrifices precision optimality in exchange for rigorous finite-sample guarantees that hold without distributional assumptions.

\noindent \textbf{Ground Truth Requirement.} Beyond statistical assumptions, ConRAD's calibration procedure introduces a practical systems requirement: a held-out calibration set of queries with known ground-truth answers. These are necessary to compute non-conformity scores and derive the calibrated thresholds. While obtaining exhaustive ground truth in an open-world setting is inherently challenging, database administrators can bootstrap system initialization using historical query logs with verified results, manual expert validation for a small sample of queries, or synthetic query generation over densely populated, trusted subgraphs.

\noindent \textbf{Compositionality vs. End-to-End Inference.} A common alternative in neural graph reasoning is end-to-end inference, where a model predicts a multi-hop destination in a single latent step. While such models can offer superior precision by capturing global dependencies, they function as black boxes that lack the plan transparency and intermediate filtering required by modern DBMSs. ConRAD explicitly favors a compositional approach by decomposing queries into primitive link-prediction operators. This modularity ensures that each step remains interpretable and allows for frontier pruning in dense regions. Importantly, the statistical foundations of ConRAD are operator-agnostic. Future neural query optimizers could treat fused multi-hop predictors as individual physical operators within our conformal framework, trading plan granularity for predictive latency while maintaining the same formal guarantees.

\section{Experiments}

We evaluate ConRAD on multi-hop EPFO queries over incomplete knowledge graphs,
addressing four research questions:

\noindent (RQ1) Validity: Does ConRAD satisfy user-specified recall targets across diverse query topologies and graph incompleteness levels?

\noindent (RQ2) Precision: Does ConRAD's risk-constrained optimization achieve precision competitive with best-case static baselines?

\noindent (RQ3)  Efficiency and Overheads: Does the conformal gate reduce neural invocations by exploiting local graph evidence, and are the offline and online overheads tractable?

\noindent (RQ4) Robustness: Are the guarantees preserved when the
underlying neural model is swapped?






\subsection{Setup}


\noindent \textbf{Datasets.} We evaluate ConRAD on three established knowledge graph benchmarks that span distinct structural regimes (Table~\ref{tab:dataset_stats}). FB15k-237~\cite{DBLP:conf/nips/BordesUGWY13}, a densely connected subgraph of Freebase~\cite{DBLP:conf/sigmod/BollackerEPST08}, features a high average degree that stresses the framework's ability to manage frontier explosion during multi-hop traversals. NELL-995~\cite{DBLP:conf/emnlp/XiongHW17}, constructed via web-scale information extraction, introduces the noise and structural irregularity characteristic of real-world knowledge bases. Finally, YAGO3-10~\cite{DBLP:conf/cidr/MahdisoltaniBS15} serves as a high-volume scalability benchmark derived from Wikipedia and WordNet~\cite{DBLP:journals/cacm/Miller95}. To rigorously assess robustness under degraded conditions, we simulate incompleteness by removing uniformly at random 5\%, 20\%, and 40\% of edges from each dataset. These sparsity levels model realistic deployment scenarios ranging from minor data staleness (5\%) to severe structural degradation (40\%) where nearly half the relational evidence is unavailable.

\begin{table}[t]
\centering
\caption{Statistics of the KG datasets used in our evaluation.}
\label{tab:dataset_stats}
\small
\begin{tabular}{lrrr}
\hline
\textbf{Dataset} & \textbf{\# Entities} & \textbf{\# Relations} & \textbf{\# Triples} \\ \hline
FB15k-237        & 14541               & 237                   & 310116             \\
NELL-995         & 75492               & 200                   & 154213             \\
YAGO3-10         & 123182              & 37                    & 1089040           \\ \hline
\end{tabular}
\vspace{-2em}
\end{table}

\noindent \textbf{The Neural Predictor.} We employ the UltraQuery foundation model~\cite{DBLP:conf/nips/GalkinZ00Z24} as a black-box scoring function $\phi$. UltraQuery provides inductive, zero-shot predictions without task-specific fine-tuning, allowing us to isolate ConRAD's calibration efficacy from the underlying model quality. While UltraQuery is capable of end-to-end multi-hop predictions in a single inference step, we explicitly deploy it as a per-hop link predictor. This allows us to evaluate ConRAD's ability to maintain recall guarantees across composed query plans. To test robustness to predictor quality, we also evaluate UltraQuery-T, a variant with different weights and lower predictive accuracy. On the NELL-995 tail-prediction task (i.e., one-hop prediction), UltraQuery-T achieves a significantly lower MRR (48.2\% vs.\ 54.2\%) and Hits@10 (67.0\% vs.\ 70.9\%) compared to the base model. This lower-fidelity variant tests ConRAD's ability to adaptively adjust thresholds when paired with a weaker predictor, demonstrating that the recall guarantee is preserved through recalibration alone.



\noindent \textbf{Baselines.} We compare against three execution strategies. \texttt{neo4j} executes queries deterministically over the observed graph $\hat{\mathcal{G}}$, guaranteeing zero false positives but with recall bounded by graph incompleteness. \texttt{neural} executes queries hop-by-hop using UltraQuery with fixed, global thresholds ($\theta \in \{0.7, 0.8, 0.9, 0.99\}$), representing the standard workflow of manual threshold tuning without formal guarantees. \texttt{hybrid} implements the conformal gate with fixed global thresholds ($\theta \in \{0.4, 0.5, 0.6, 0.7\}$) and routing threshold $\delta=0.5$. This baseline represents the best a practitioner can achieve by combining retrieval and inference with manual threshold tuning, without offline calibration or per-operator adaptation.

\noindent \textbf{Metrics.} We report four indicators, averaged over the evaluation query set. First, the \emph{Empirical Recall} ($1 - \text{FNR}$) is defined as the fraction of ground-truth entities retrieved. Next, \emph{Precision} is defined as the fraction of returned entities that are ground-truth answers. The \emph{Abstention Rate} is the fraction of queries producing empty result sets ($|\hat{Y}^{(k)}_{\boldsymbol{\lambda}}| = 0$). Lastly, the \emph{Neural Invocations} are defined as the total number of inference operator calls per query, our primary proxy for computational overhead. Queries where the system abstains are assigned zero precision and zero recall.

\noindent \textbf{Calibration Procedure.}\label{sec:calib-exp-setup} 
We employ a split-conformal strategy, executing 5000 queries per topology over the incomplete graph $\hat{\mathcal{G}}$. The workload is partitioned into disjoint sets for CRC calibration ($|\mathcal{D}_{\text{opt}}|=2000$), cardinality-driven strategy selection ($|\mathcal{D}_{\text{valid}}|=2000$), and final evaluation (1000). For high incompleteness (40\%) or large-scale settings (YAGO3-10), the evaluation is reduced to 500 queries to manage materialization overhead. We discretize the scalarization (Section~\ref{sec:scalarization}) into a grid of 100 candidate values of $\eta$. The candidate strategy set $\Gamma$ contains five exponent vectors per query topology: a uniform strategy ($\gamma_j = 1$ for all $j$), two asymmetric strategies that prioritize early vs.\ late pruning ($\boldsymbol{\gamma} = [1.5, 1.0, 0.5]$ and $[0.5, 1.0, 1.5]$ for 3p), and two balanced variants ($\gamma_j = 0.7$ and $\gamma_j = 1.5$ for all $j$). The optimizer selects the strategy yielding the lowest cardinality on $\mathcal{D}_\text{valid}$. The calibration is performed independently per query topology.

\begin{figure}[t]
    \centering
    \includegraphics[width=0.97\linewidth]{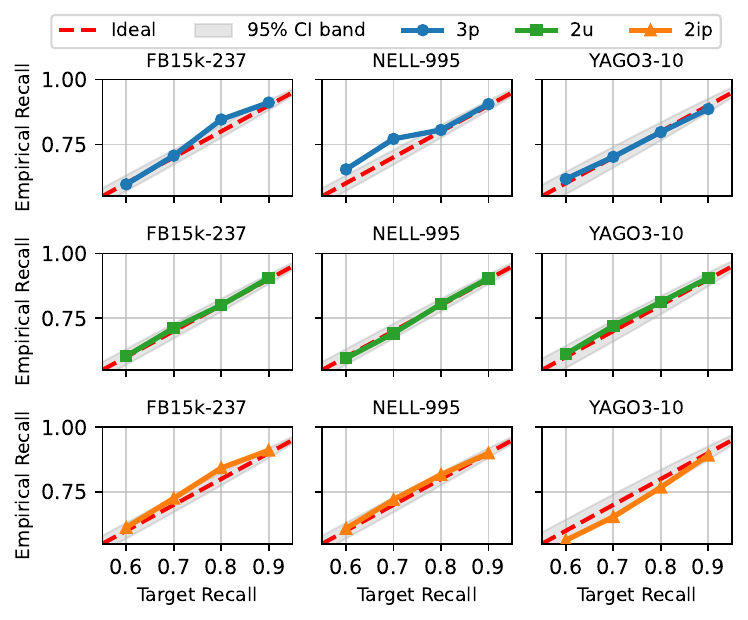}
    \caption{Validity results across query topologies, and datasets under 20\% data incompleteness level. ConRAD offers effective recall control over all the tested configurations.}
    \label{fig:validity}
    \vspace{-2em}
\end{figure}

\noindent \textbf{Frontier Management.} During calibration, loose thresholds can require materializing up to $|\mathcal{V}|^3$ candidate trajectories for \texttt{3p} queries. To maintain tractability, we restrict intermediate propagation to the union of ground-truth entities and the top-10 highest-scoring neural candidates per hop. First, because conformal thresholds are determined primarily by ground-truth scores, this pruning preserves the calibration guarantees~\cite{vovk2005algorithmic,DBLP:journals/ftml/AngelopoulosB23}. Further, valid end-to-end paths can traverse intermediate entities absent from any sub-query's ground truth, necessitating the inclusion of top-scoring neural candidates. We retain 10 per hop as a practical trade-off between path coverage and materialization cost; varying this from 5 to 20 had negligible effect on the calibrated thresholds. At runtime, calibrated thresholds are applied to all candidates without restriction.

\noindent \textbf{Experimental Scalability.} We restrict the calibration and evaluation to queries with intermediate ground-truth frontiers $|Y^{(i)}| \le 50$, due to the memory overhead of materializing $|Y^{(i)}| \times |\mathcal{V}|$ score matrices on the GPU. This is a pragmatic constraint for batch evaluation, not a framework limitation; larger frontiers are natively supported via standard techniques such as gradient accumulation or multi-GPU sharding.

\noindent \textbf{Implementation Details.} The graph store is Neo4j v5.20.0, handling all retrieval traversals and set operations (intersection, union). Inference uses a PyTorch deployment of UltraQuery on a single NVIDIA H100 NVL (96GB VRAM). The routing threshold $\delta$ is set to 0.5 in all experiments. The hash function $\mathcal{H}$ in Eq.~\ref{eq:unified-score} is a deterministic MD5 hash normalized to $[0, 1]$.

\subsection{Results}
\label{sec:experiments}
We evaluate ConRAD on three standard KG datasets across varying recall targets, query topologies, and incompleteness levels. Our experiments demonstrate four key findings:

\begin{enumerate}
    \item ConRAD strictly satisfies declarative recall targets across all the tested settings. Even under severe 40\% graph incompleteness and difficult query topologies, the empirical recall tightly tracks the target with a maximum downward deviation of only 0.0464.
    
    \item ConRAD achieves precision competitive with (and often exceeding) best-case static baselines, reaching up to 96\% on FB15k-237 \texttt{3p} queries. Unlike static methods, ConRAD eliminates silent coverage violations, maintaining low abstention rates even as incompleteness increases to 40\%.
    
    \item The conformal gate dynamically optimizes the execution by bypassing neural inference when local graph evidence suffices. At moderate recall targets in near-complete graphs (5\% incompleteness), ConRAD reduces total neural invocations to 0\%, selectively reintroducing inference only as frontiers expand or evidence degrades.

    \item ConRAD's statistical guarantees are robust to predictor quality. Substituting UltraQuery with UltraQuery-T preserves the recall target across all query topologies (maximum downward deviation: 0.017). Precision degrades proportionally to model quality, confirming that ConRAD provides formal recall guarantees without imposing accuracy requirements on the underlying predictor.
    
\end{enumerate}

\begin{figure}[t]
    \centering
    \includegraphics[width=0.97\linewidth]{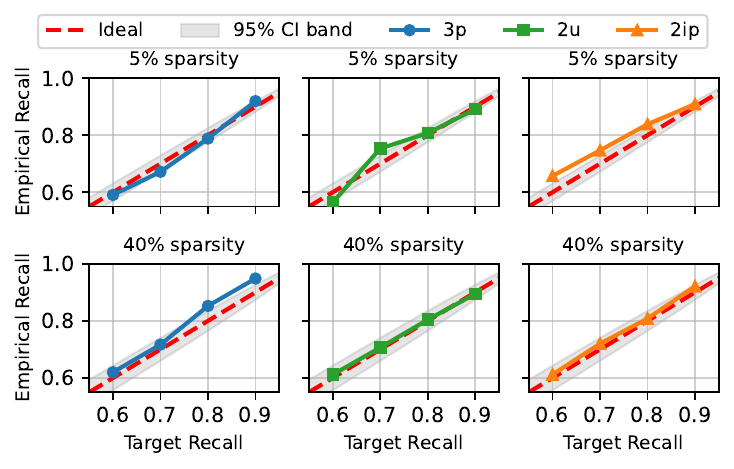}
    
    \caption{Validity results across query topologies for the FB15k-237 dataset. ConRAD offers effective recall control across all tested data incompleteness levels.}
    \label{fig:sparsity}
    \vspace{-1em}
\end{figure}

\begin{figure}[t]
    \centering
    \includegraphics[width=0.96\linewidth]{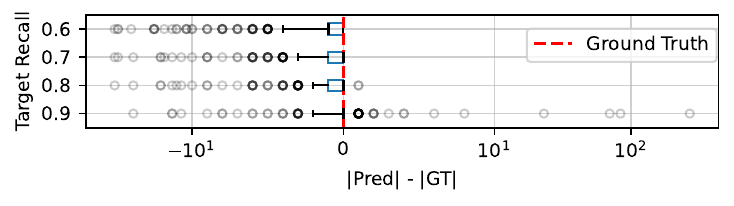}
    \caption{Per-query cardinality adaptivity ($|Pred| - |GT|$) on NELL-995 (20\% data incompleteness) for \texttt{2ip}. The tight clustering around zero at moderate recall targets (0.6-0.7) indicates high precision across most queries. The positive tail at stricter targets shows the system trades off precision for recall on difficult queries.}
    \label{fig:adaptivity}
\end{figure}

\begin{table}[t]
\centering

\caption{Maximum precision subject to empirical recall $\ge 0.60$ at 20\% incompleteness. For each baseline, we report the best-performing threshold. For ConRAD, we report the risk budget $\alpha$ yielding the highest precision at the same recall floor. \textit{Fails (X)} means the method did not reach 60\% recall, where \textit{X} is the maximum achieved.}
\label{tab:baselines}
\renewcommand{\arraystretch}{0.9}
\resizebox{\linewidth}{!}{
\begin{tabular}{@{}llcccc@{}}
\toprule
\textbf{Dataset} & \textbf{$\mathcal{T}$} & \textbf{\texttt{neo4j}} & \textbf{\texttt{neural} (best $\theta$)} & \textbf{\texttt{hybrid} (best $\theta$)} & \textbf{conrad (best $\alpha$)} \\ \midrule
\multirow{3}{*}{FB15k-237} 
 & \texttt{3p}  & Fails (0.59) & 0.82 ($\theta=0.7$) & 0.95 ($\theta=0.4$) & \textbf{0.96} ($\alpha=0.9$) \\
 & \texttt{2ip} & Fails (0.57) & 0.70 ($\theta=0.7$) & 0.92 ($\theta=0.4$) & \textbf{0.93} ($\alpha=0.9$) \\
 & \texttt{2u}  & 0.99 & 0.99 ($\theta=0.8$) & 1.00 ($\theta=0.4$) & \textbf{1.00} ($\alpha=0.9$) \\ \midrule
\multirow{3}{*}{NELL-995}  
 & \texttt{3p}  & Fails (0.36) & 0.70 ($\theta=0.7$) & \textbf{0.94} ($\theta=0.4$) & 0.93 ($\alpha=0.9$) \\
 & \texttt{2ip} & Fails (0.57) & 0.65 ($\theta=0.7$) & 0.90 ($\theta=0.4$) & \textbf{0.90} ($\alpha=0.9$) \\
 & \texttt{2u}  & 0.97 & 0.96 ($\theta=0.9$) & 0.97 ($\theta=0.4$) & \textbf{0.99} ($\alpha=0.9$) \\ \midrule
\multirow{3}{*}{YAGO3-10}  
 & \texttt{3p}  & Fails (0.56) & Fails (0.56)   & 0.85 ($\theta=0.4$) & \textbf{0.87} ($\alpha=0.8$) \\
 & \texttt{2ip} & Fails (0.53) & Fails (0.50)   & \textbf{0.85} ($\theta=0.4$) & 0.83 ($\alpha=0.9$) \\
 & \texttt{2u}  & 0.99 & 0.93 ($\theta=0.7$) & 0.99 ($\theta=0.5$) & \textbf{0.99} ($\alpha=0.8$) \\ \bottomrule
\end{tabular}
}
\end{table}

\subsubsection{Validity.} \label{sec:exps-validity} To answer RQ1, we evaluate whether ConRAD's conformal calibration translates into reliable recall control across datasets, query topologies, and incompleteness levels. Figure~\ref{fig:validity} reports empirical recall as a function of the recall target (0.6-0.9) at 20\% incompleteness across all three datasets and query topologies. ConRAD consistently satisfies the declared risk budget $\alpha$. Empirical recall tightly tracks or marginally exceeds the target, generally falling within or above the shaded 95\% confidence bands (derived from the standard error over the evaluation query set). The bands widen on YAGO3-10 and at 40\% incompleteness, where the evaluation set was reduced to 500 queries for tractability. The maximum upward deviation across all settings was 0.0711 (NELL-995, \texttt{3p}, target 0.70); the maximum downward deviation was 0.0464 (YAGO3-10, \texttt{2ip}, target 0.70). Both are consistent with finite-sample calibration and threshold grid quantization~\cite{vovk2005algorithmic, DBLP:journals/ftml/AngelopoulosB23}. Across topologies, \texttt{2u} exhibits the tightest tracking: the parallel union structure provides inherent recall redundancy, allowing the calibrator to select tighter, higher-precision thresholds. The \texttt{3p} topology is the most difficult, as errors compound sequentially across three hops. Yet, the guarantee holds robustly across all datasets. 

Figure~\ref{fig:sparsity} isolates the effect of incompleteness by evaluating ConRAD on FB15k-237 under 5\% (near-complete) and 40\% (high incompleteness) conditions. Even under these extremes, empirical recall remains tightly bounded to the target. The maximum upward deviation was 0.0573 (2ip, 5\%, target 0.60), while the maximum downward deviation was 0.0354 (2u, 5\%, target 0.60). These results confirm that ConRAD's guarantees hold across incompleteness levels. The calibration procedure is agnostic to the degree of incompleteness and requires no manual intervention or prior knowledge of the graph's structural state.

While Figures~\ref{fig:validity} and~\ref{fig:sparsity} report aggregate recall, Figure~\ref{fig:adaptivity} examines per-query behavior. For each query on NELL-995 at 20\% incompleteness, we plot the difference between predicted and ground-truth answer set sizes against the recall target. At moderate targets (0.6-0.7), a substantial fraction of queries cluster tightly around zero, indicating near-perfect precision. As the target increases, the system admits more candidates, trading precision for recall. The positive tail corresponds to queries over severely incomplete local neighborhoods. Here, retrieval yields few edges, and the inference operator produces low-confidence scores, forcing the system to admit more candidates to meet the risk budget. This confirms that ConRAD does not meet its recall targets by uniformly inflating prediction sets, but it adapts selectivity to the local difficulty of each query. We observe consistent behavior across all datasets and query topologies. We report NELL-995 \texttt{2ip} for brevity.

\begin{figure}[t]
    \centering
    \includegraphics[width=0.98\linewidth]{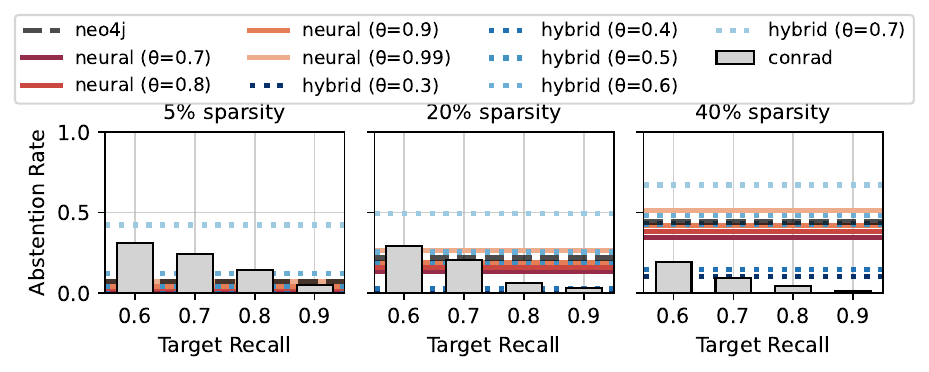}
    \caption{Query abstention rate on FB15k-237 (\texttt{3p}). Static baselines abstain on up to 60\% of queries at 40\% data incompleteness. ConRAD's abstention remains stable.}
    \label{fig:baselines-abstention}
\end{figure}

\subsubsection{Precision}

To answer RQ2, we compare ConRAD's precision against static baselines that require manual threshold tuning and provide no formal recall guarantees. Table~\ref{tab:baselines} reports the maximum precision achieved by each execution strategy subject to an empirical recall constraint of $\geq 0.6$ at 20\% data incompleteness level. For static baselines, we report the best-performing threshold from a coarse grid search. For ConRAD, we report the risk budget $\alpha$ that achieves the same recall constraint with the highest precision.


At 20\% incompleteness, deterministic retrieval (\texttt{neo4j}) fails to reach 60\% recall for both \texttt{3p} and \texttt{2ip} across all three datasets, because a missing edge at any hop prunes all downstream paths. The only topology where retrieval meets the recall floor is \texttt{2u}, where redundant parallel paths make it resilient to missing edges. All methods achieve near-perfect precision on \texttt{2u}, confirming that union queries do not stress the predictive pipeline.


Conversely, the \texttt{neural} baseline demonstrates why uncalibrated inferences are insufficient for rigorous database querying. While pure inference manages to satisfy the 60\% recall constraint on FB15k-237 and NELL-995, it suffers a severe drop in precision. The \texttt{hybrid} baseline reaches significantly higher precision ranges, confirming the value of combining retrieval with inference. However, ConRAD's system-managed recall bounding consistently matches or exceeds this \texttt{hybrid}'s exhaustive static thresholding in terms of precision. On FB15k-237, ConRAD reaches 0.96 precision on \texttt{3p} queries compared to 0.95, and 0.94 on intersection queries compared to 0.93. Even on the challenging YAGO3-10 dataset, ConRAD outperforms \texttt{hybrid} on \texttt{3p} queries (0.87 versus 0.85). Where \texttt{hybrid} leads marginally (e.g., YAGO3-10 \texttt{2ip}, 0.85 vs. 0.83), the gap is narrow. Crucially, the \texttt{hybrid} results reflect the best threshold found by a coarse grid search with no guarantee of transfer to unseen queries. ConRAD derives its thresholds automatically from the declared risk budget $\alpha$ and provides a formal recall guarantee over the query distribution.

We next examine query abstention, where the system returns an empty result set. Figure~\ref{fig:baselines-abstention} tracks the abstention rate for FB15k-237 \texttt{3p} across incompleteness levels. At 5\%, abstention is negligible for all baselines. As incompleteness increases to 40\%, the static \texttt{hybrid} baselines diverge sharply. Configurations with moderate-to-high thresholds abstain on up to 60\% of queries, as fixed thresholds that worked at lower incompleteness become too aggressive. These are silent failures, yielding zero recall on the affected instances. ConRAD maintains consistently low abstention across all incompleteness levels, staying below 30\% even at 40\% data incompleteness. Because abstention contributes zero recall to the expected risk, the CRC optimization inherently steers away from threshold vectors that produce empty results. Abstention does rise slightly at relaxed recall targets (e.g., 29.2\% at target 0.6 vs. 3.3\% at target 0.9 under 40\% incompleteness). This is a side effect of the precision objective. A generous risk budget allows tighter thresholds that produce smaller prediction sets, which can yield empty results on the hardest queries. At strict recall targets, the optimizer cannot tolerate such abstentions and selects more permissive thresholds accordingly.

\begin{figure}[t]
    \centering
    \includegraphics[width=0.98\linewidth]{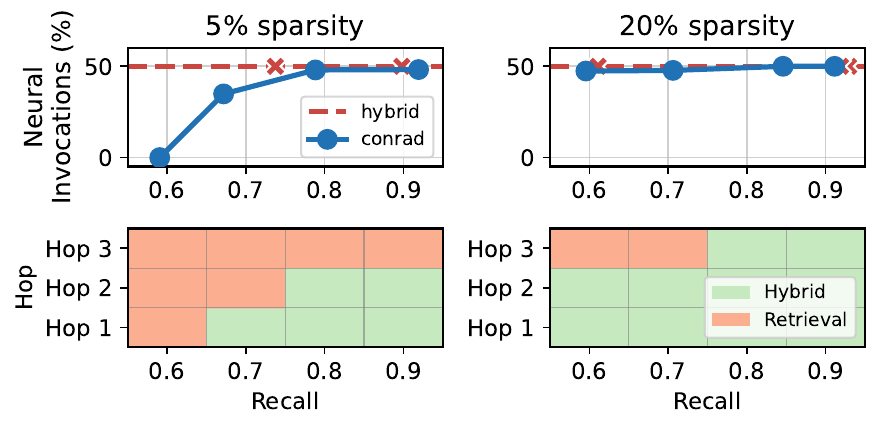}
    \caption{Conformal gate efficiency on FB15k-237 \texttt{3p}. Top: aggregate inference invocations as a fraction of total hops. Bottom: per-hop routing decisions. The gate bypasses inference at early hops where retrieval suffices and switches to hybrid mode at later hops as local evidence degrades.}
    \label{fig:neural-calls}
    \vspace{-1em}
\end{figure}

\subsubsection{Efficiency} 

To answer RQ3, we evaluate whether the conformal gate reduces neural invocations by exploiting local graph evidence. Figure~\ref{fig:neural-calls} reports the percentage of inference operator calls across the FB15k-237 \texttt{3p} pipeline. The \texttt{hybrid} baseline triggers inference at every hop regardless of local graph density or recall target. ConRAD's conformal gate bypasses inference when retrieval alone satisfies the calibrated threshold. The savings depend on both the risk budget and the incompleteness level. At a recall target of 0.6 under 5\% incompleteness, ConRAD bypasses inference entirely, reducing invocations to zero. As the target increases to 0.7 and 0.9, invocation rates rise to 34.9\% and 48.2\% respectively. The per-hop routing breakdown reveals where the savings originate. Initial hops often traverse densely connected neighborhoods where retrieval alone satisfies the threshold. At moderate recall targets, the conformal gate operates in retrieval-only mode for the first hop, reducing both the immediate inference cost and the intermediate result set size, which limits false-positive propagation to downstream operators. As execution progresses to later hops, or as incompleteness increases to 20\%, local graph evidence degrades and the conformal gate switches to hybrid mode to meet the recall target. Despite bypassing inference at early hops, the aggregate invocation rate is dominated by later hops, which process a larger candidate set due to frontier expansion. At 20\% incompleteness, ConRAD's invocation rate plateaus at 50\% for recall targets of both 0.8 and 0.9, reflecting this effect. The conformal gate thus concentrates inference where retrieval evidence is insufficient, reducing total model calls while satisfying the recall guarantee.

\begin{table}[t]
\caption{Offline calibration wall-clock time (sec). Each cell reports the \textbf{Total Time} (\textit{$\lambda$ Optimization only}), where Total = Scores Computation + $\lambda$ Optimization. The scores computation dominates the overhead.}
\label{tab:calibration_times}
\vspace{-1em}
\centering
\renewcommand{\arraystretch}{0.9}
\small
\begin{tabular}{lrrr}
\toprule
Dataset & \multicolumn{1}{c}{\texttt{3p}} & \multicolumn{1}{c}{\texttt{2u}} & \multicolumn{1}{c}{\texttt{2ip}} \\
\midrule
FB15k-237 & 1302.2 (344)    & 233.1 (4.5)  & 750.2 (22.1)  \\
NELL-995  & 2095.0 (381.6)  & 603.3 (7.4)  & 6142.5 (81.4) \\
YAGO3-10  & 3711.3 (927.2)  & 1327.9 (7.7) & 4476.7 (91.2) \\
\bottomrule
\end{tabular}
\vspace{-1em}
\end{table}


\subsubsection{Overheads} Table~\ref{tab:calibration_times} reports wall-clock calibration time across datasets and query topologies at 20\% incompleteness. The total cost remains tractable, reaching at most approximately 1.5 hours for the largest setting (YAGO3-10 \texttt{3p}). A breakdown reveals that score computation dominates the overhead. On YAGO3-10 \texttt{2u}, it accounts for over 99\% of the total time, while the threshold optimization requires only 7.7 seconds. This phase consists entirely of evaluating the neural model over graph triples, a process that can be decoupled from the optimization itself. In a production environment, the system can gather non-conformity scores asynchronously by piggybacking on online query execution. Once a sufficient sample is collected, the optimizer only needs to trigger the lightweight threshold search to produce updated calibration parameters.

Lastly, the runtime overhead introduced by ConRAD is negligible. The gate's routing logic requires only computing the unified score via a lightweight, deterministic hash function for retrieving facts and a scalar projection for inference scores, followed by a single floating-point comparison against the calibrated threshold per candidate tuple. This ensures that the formal recall guarantees add only constant-time overhead per candidate triple.

\begin{figure}[t]
    \centering
    \includegraphics[width=0.98\linewidth]{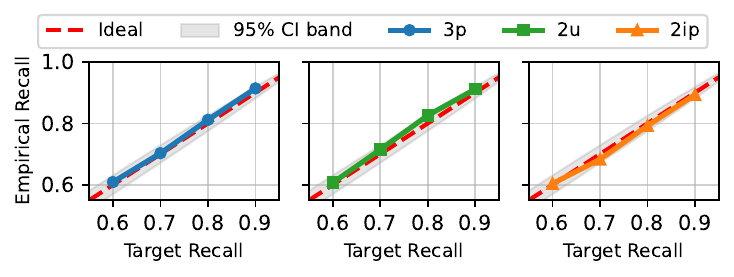}
    \caption{Robustness to predictor quality (RQ4): ConRAD with UltraQuery-T on NELL-995 (20\% sparsity). Despite a weaker underlying model, the recall guarantee is preserved across all three query templates, requiring only offline recalibration. Maximum precision reaches 0.70 (\texttt{3p}), 0.66 (\texttt{2ip}), and 0.96 (\texttt{2u}), compared to 0.93, 0.9, and 0.99 for UltraQuery under identical conditions (Table~\ref{tab:baselines}).}
    \label{fig:ultraqueryt}
    \vspace{-2em}
\end{figure}

\subsubsection{Robustness} To answer RQ4, we evaluate whether ConRAD's statistical guarantees transfer when the underlying neural model is replaced. We substitute UltraQuery with UltraQuery-T, a less accurate variant of the same architecture, and recalibrate on NELL-995 at 20\% sparsity. No other system component is modified.  Figure~\ref{fig:ultraqueryt} shows that the recall guarantee is preserved across all three query templates, with a maximum upward deviation of 0.0259 (target 0.80, \texttt{2u}) and a maximum downward deviation of 0.0169 (target 0.70, \texttt{2ip}). Both are well within the finite-sample variance, as noted also in Sec.~\ref{sec:exps-validity}. As expected, precision degrades relative to UltraQuery, reflecting the weaker model's noisier score distribution. Maximum precision across templates reaches 0.70 (3p), 0.96 (2u), and 0.66 (2ip), compared to 0.93, 0.9, and 0.99 for UltraQuery under identical conditions (Table~\ref{tab:baselines}). This confirms that ConRAD cannot compensate for a fundamentally less accurate predictor, but the safety contract remains intact. Practitioners can therefore upgrade or replace the scoring model and restore guarantees through recalibration alone, without modifying the framework or query plans.

\section{Discussion}


\noindent \textbf{Marginal Guarantees and Exchangeability.} ConRAD provides marginal guarantees, ensuring the recall target is satisfied on average across the query distribution rather than per individual query instance. Achieving strict conditional coverage is provably impossible without restrictive distributional assumptions or infinite calibration data~\cite{foygel2021limits}. While this trade-off enables distribution-free guarantees, it fundamentally relies on the exchangeability of calibration and test queries. The premise of exchangeability over non-i.i.d.\ graph structures has been formally established for Graph Neural Networks~\cite{huang2023uncertainty} and specifically for link prediction when queries are sampled uniformly~\cite{DBLP:conf/kdd/ZhaoK024}. However, as we have empirically demonstrated in prior work~\cite{DBLP:journals/pvldb/HorchidanZBC25}, conformal guarantees can degrade under severe distribution shifts (e.g., sudden query skew toward unseen, high-variance relations), which is a limitation that applies equally to ConRAD's predictive pipelines. We discuss potential mitigation strategies, including adaptive conformal methods designed for distribution shift, in Sec.~\ref{sec:future}.


\noindent \textbf{Zero-Shot Transfer Constraints.} ConRAD requires a representative calibration workload for each deployment setting and does not currently support zero-shot transfer to novel query topologies or out-of-distribution workloads without recalibration. When deploying on a new query template without calibration data, the system must invoke a conservative fallback, such as the union bound alternative described in Sec.~\ref{sec:method}. However, this limitation does not preclude the modular composition of pre-calibrated branches. If disjoint query branches have already been independently calibrated, they can be safely composed into novel query topologies via deterministic set operators. In such cases, the system could apply structural risk composition rules to primitive sub-plans rather than at the individual operator level, potentially avoiding the severe precision loss of a pipeline-wide fallback. We leave formal analysis of this compositional strategy to future work.


\section{Future work} \label{sec:future}

\noindent \textbf{Compute-Driven Optimization Objectives.} While this paper optimizes for end-to-end precision, the cost function introduced in Sec.~\ref{sec:problem} can alternatively target computational effort. Future work will investigate a compute-driven objective that minimizes query latency and I/O cost. This inherently incentivizes tight thresholds early in the plan to aggressively prune intermediate results, a stochastic analogue to classical predicate pushdown. However, this introduces a tension with end-to-end precision: tight early thresholds force downstream operators to adopt more permissive thresholds to satisfy the risk budget $\alpha$, admitting more false positives into the final output. Navigating this trade-off offers a rich space for future query optimization research.

\noindent \textbf{Continuous Online Recalibration.} To address the static nature of offline calibration and mitigate workload drift, we plan to explore continuous online recalibration. As shown in Sec.~\ref{sec:experiments}, score computation dominates the offline phase, so the framework could naturally support asynchronous recalibration by piggybacking score collection onto online query execution and triggering the lightweight threshold optimization incrementally. Furthermore, integrating adaptive conformal methods designed for distribution shift, such as Weighted Conformal Prediction~\cite{barber_conformal_2023, DBLP:conf/nips/TibshiraniBCR19} and Adaptive Conformal Inference~\cite{gibbs_adaptive_2021}, could allow the framework to adapt to evolving graph structures and shifting query distributions.

\noindent \textbf{Generalization to Broader AI Systems.} Finally, the ConRAD methodology applies to any predictive pipeline that forms a DAG of operators where stochastic components produce monotonic prediction sets with respect to a threshold parameter. Empirical validation of the framework beyond KGs queries is an important direction for future work. We plan to extend these finite-sample recall guarantees to other composite architectures, such as RAG pipelines, learned join operators, and complex multi-stage retrieval systems.
\section{Related work}

\noindent\textbf{AI-Integrated Database Systems.} Modern data management systems increasingly embed learned components to replace or augment traditional operations, from learned indexes~\cite{DBLP:conf/sigmod/KraskaBCDP18} and query optimizers~\cite{DBLP:journals/sigmod/MarcusNMTAK22} to systems that execute queries over incomplete data. Prior works optimize queries involving expensive ML-based predicates through cascade routing and cost-quality trade-offs (e.g., NoScope~\cite{kang2017noscope}, BlazeIt~\cite{DBLP:journals/pvldb/KangBZ19}, VIVA~\cite{DBLP:journals/pvldb/RomeroHPKZK22}, EVAPORATE~\cite{DBLP:journals/pvldb/AroraYENHTR23}). Raven~\cite{DBLP:conf/sigmod/ParkSBSIK22, DBLP:conf/cidr/KaranasosIPSPPX20} presents a unified optimization framework for prediction queries that compose data processing operators with ML inference in a single plan. Neural Graph Databases~\cite{DBLP:journals/tmlr/Ren0ZLC24, DBLP:conf/log/BestaISODPCH22, DBLP:conf/vldb/Horchidan23, DBLP:conf/esws/HorchidanC23} compose exact graph traversals with neural link prediction, CAESURA~\cite{DBLP:conf/cidr/UrbanB24} integrates ML inference into relational SQL plans, InferDB~\cite{DBLP:journals/pvldb/SalazarDiazGR24} accelerates in-database inference by approximating ML pipelines through index lookups, and compound AI frameworks such as DSPy~\cite{khattab2023dspy}, LOTUS~\cite{patel2024semantic}, and Palimpzest~\cite{DBLP:conf/cidr/LiuRC0CCFK0SV25} orchestrate multi-model workflows over structured and unstructured data. While these frameworks improve expressiveness, orchestration, or latency, they fundamentally lack formal, distribution-free guarantees for end-to-end correctness.


\noindent\textbf{Approximate Query Processing and Uncertainty in Databases.} Managing uncertainty natively within the query engine is a foundational database problem. Traditional Approximate Query Processing systems such as BlinkDB~\cite{agarwal2013blinkdb}, VerdictDB~\cite{park2018verdictdb}, and online aggregation~\cite{hellerstein1997online} provide statistical error bounds, but these quantify sampling variance over completely observed data rather than uncertainty from learned models. A cornerstone of database research, Probabilistic Databases (e.g., MayBMS~\cite{DBLP:conf/icde/AntovaKO07a}, Trio~\cite{aggarwal2009trio}, and Dalvi and Suciu~\cite{DBLP:conf/vldb/DalviS04}) rigorously propagate tuple-level uncertainty through query evaluation. While foundational, these systems typically model a closed world of explicitly annotated probabilities. In contrast, ConRAD addresses open-world incompleteness, where unobserved facts must be dynamically recovered and bounded using uncalibrated neural scoring. To establish formal guarantees over such black-box models, the database community has recently begun adopting Conformal Prediction. For example, dbET~\cite{li2023dbet} uses CP to bound execution time distributions for cost-based plan selection, \citet{DBLP:journals/pvldb/LiuGS25} apply it to verify learned query optimizers with latency bounds, and ConANN~\cite{DBLP:journals/pvldb/HorchidanZBC25} provides recall guarantees for approximate kNN search. While these methods successfully manage model-driven uncertainty in databases, they operate exclusively at the single-operator level or target system performance metrics. ConRAD elevates conformal calibration from isolated operators to the multi-hop pipeline level.

\noindent\textbf{Uncertainty in Knowledge Graph Reasoning.} A rich landscape of neural models has been developed for KG reasoning. Link prediction models such as TransE~\cite{bordes2013translating}, ComplEx~\cite{trouillon2016complex}, and RotatE~\cite{sun2019rotate} learn expressive scoring functions for individual triples. Concurrently, neural query embedding frameworks (e.g., Query2Box~\cite{ren2020query2box}, BetaE~\cite{ren2020beta}, NQE~\cite{DBLP:conf/aaai/LuoEYZGYTLW23}, GNN-QE~\cite{zhu2022neural}) and foundation models (ULTRA~\cite{galkin2023towards}, UltraQuery~\cite{DBLP:conf/nips/GalkinZ00Z24}) extend this to complex multi-hop logical queries. However, these models optimize for pointwise accuracy and produce uncalibrated confidence scores. Standard post-hoc calibration techniques (e.g., Platt scaling~\cite{platt1999probabilistic}, temperature scaling~\cite{guo2017calibration}) improve point-wise score reliability but provide no finite-sample guarantees for set-valued outputs. To address this, recent work has applied conformal prediction to Graph Neural Networks (CF-GNN~\cite{huang2023uncertainty}) and individual KG link predictors~\cite{zhu2025conformalized, DBLP:conf/kdd/ZhaoK024}, while Bayesian approaches~\cite{DBLP:conf/aaai/ChenCSSZ19} explicitly model predictive uncertainty in evolving graphs. However, these methods calibrate individual operators in isolation. ConRAD extends rigorous uncertainty quantification beyond isolated predictors, delivering dynamically routed, end-to-end recall guarantees for composed query pipelines.

\section{Conclusions}
ConRAD addresses a fundamental gap in modern data systems by transitioning reliability from a manually tuned hyperparameter to a declarative system constraint. By extending Conformal Risk Control to multi-operator query topologies, we provide the first framework capable of delivering finite-sample recall guarantees for complex queries over incomplete knowledge graphs. Our evaluation demonstrates that ConRAD strictly preserves user-specified recall targets across varying query topologies and incompleteness levels, achieving precision competitive with best-case static baselines. As the industry moves toward composing stochastic AI primitives into broader data processing stacks, ConRAD provides a principled methodology for maintaining the database correctness contract without sacrificing the predictive power of learned models.




\bibliographystyle{ACM-Reference-Format}
\bibliography{sample}

@String{Computing = "Computing" }

@String{Computer = "{IEEE} Computer" }

@String{Springer = "Springer-Verlag" }

@article{DBLP:journals/pvldb/SalazarDiazGR24,
  author       = {Ricardo Salazar{-}D{\'{\i}}az and
                  Boris Glavic and
                  Tilmann Rabl},
  title        = {InferDB: In-Database Machine Learning Inference Using Indexes},
  journal      = {Proc. {VLDB} Endow.},
  volume       = {17},
  number       = {8},
  pages        = {1830--1842},
  year         = {2024},
  url          = {https://www.vldb.org/pvldb/vol17/p1830-salazar-diaz.pdf},
  doi          = {10.14778/3659437.3659441},
  bibsource    = {dblp computer science bibliography, https://dblp.org}
}

@article{DBLP:journals/pvldb/RomeroHPKZK22,
  author       = {Francisco Romero and
                  Johann Hauswald and
                  Aditi Partap and
                  Daniel Kang and
                  Matei Zaharia and
                  Christos Kozyrakis},
  title        = {Optimizing Video Analytics with Declarative Model Relationships},
  journal      = {Proc. {VLDB} Endow.},
  volume       = {16},
  number       = {3},
  pages        = {447--460},
  year         = {2022},
  url          = {https://www.vldb.org/pvldb/vol16/p447-romero.pdf},
  doi          = {10.14778/3570690.3570695},
  bibsource    = {dblp computer science bibliography, https://dblp.org}
}

@ArtifactSoftware{R,
    title = {R: A Language and Environment for Statistical Computing},
    author = {{R Core Team}},
    organization = {R Foundation for Statistical Computing},
    address = {Vienna, Austria},
    year = {2019},
    url = {https://www.R-project.org/},
}

@article{huang2023uncertainty,
  title={Uncertainty quantification over graph with conformalized graph neural networks},
  author={Huang, Kexin and Jin, Ying and Candes, Emmanuel and Leskovec, Jure},
  journal={Advances in Neural Information Processing Systems},
  volume={36},
  pages={26699--26721},
  year={2023}
}

@inproceedings{DBLP:conf/vldb/Horchidan23,
  author       = {Sonia Horchidan},
  title        = {Query Optimization for Inference-Based Graph Databases},
  booktitle    = {PhD@VLDB},
  series       = {{CEUR} Workshop Proceedings},
  volume       = {3452},
  pages        = {33--36},
  publisher    = {CEUR-WS.org},
  year         = {2023}
}

@inproceedings{DBLP:conf/kdd/ZhaoK024,
  author       = {Tianyi Zhao and
                  Jian Kang and
                  Lu Cheng},
  title        = {Conformalized Link Prediction on Graph Neural Networks},
  booktitle    = {{KDD}},
  pages        = {4490--4499},
  publisher    = {{ACM}},
  year         = {2024}
}

@inproceedings{DBLP:conf/aaai/LuoEYZGYTLW23,
  author       = {Haoran Luo and
                  Haihong E and
                  Yuhao Yang and
                  Gengxian Zhou and
                  Yikai Guo and
                  Tianyu Yao and
                  Zichen Tang and
                  Xueyuan Lin and
                  Kaiyang Wan},
  title        = {{NQE:} N-ary Query Embedding for Complex Query Answering over Hyper-Relational
                  Knowledge Graphs},
  booktitle    = {{AAAI}},
  pages        = {4543--4551},
  publisher    = {{AAAI} Press},
  year         = {2023}
}

@article{DBLP:journals/cacm/Miller95,
  author       = {George A. Miller},
  title        = {WordNet: {A} Lexical Database for English},
  journal      = {Commun. {ACM}},
  volume       = {38},
  number       = {11},
  pages        = {39--41},
  year         = {1995}
}

@article{DBLP:journals/pvldb/HorchidanZBC25,
  author       = {Sonia Horchidan and
                  Fabian Zeiher and
                  Henrik Bostr{\"{o}}m and
                  Paris Carbone},
  title        = {ConANN: Conformal Approximate Nearest Neighbor Search},
  journal      = {Proc. {VLDB} Endow.},
  volume       = {19},
  number       = {1},
  pages        = {29--42},
  year         = {2025}
}

@inproceedings{DBLP:conf/cidr/MahdisoltaniBS15,
  author       = {Farzaneh Mahdisoltani and
                  Joanna Biega and
                  Fabian M. Suchanek},
  title        = {{YAGO3:} {A} Knowledge Base from Multilingual Wikipedias},
  booktitle    = {{CIDR}},
  publisher    = {www.cidrdb.org},
  year         = {2015}
}

@inproceedings{DBLP:conf/sigmod/BollackerEPST08,
  author       = {Kurt D. Bollacker and
                  Colin Evans and
                  Praveen K. Paritosh and
                  Tim Sturge and
                  Jamie Taylor},
  title        = {Freebase: a collaboratively created graph database for structuring
                  human knowledge},
  booktitle    = {{SIGMOD} Conference},
  pages        = {1247--1250},
  publisher    = {{ACM}},
  year         = {2008}
}

@inproceedings{DBLP:conf/nips/BordesUGWY13,
  author       = {Antoine Bordes and
                  Nicolas Usunier and
                  Alberto Garc{\'{\i}}a{-}Dur{\'{a}}n and
                  Jason Weston and
                  Oksana Yakhnenko},
  title        = {Translating Embeddings for Modeling Multi-relational Data},
  booktitle    = {{NIPS}},
  pages        = {2787--2795},
  year         = {2013}
}

@book{abiteboul1995foundations,
  title={Foundations of databases},
  author={Abiteboul, Serge and Hull, Richard and Vianu, Victor},
  volume={8},
  year={1995},
  publisher={Addison-Wesley Reading}
}

@article{DBLP:journals/ftml/AngelopoulosB23,
  author       = {Anastasios N. Angelopoulos and
                  Stephen Bates},
  title        = {Conformal Prediction: {A} Gentle Introduction},
  journal      = {Found. Trends Mach. Learn.},
  volume       = {16},
  number       = {4},
  pages        = {494--591},
  year         = {2023}
}

@inproceedings{DBLP:conf/iclr/AngelopoulosBFL24,
  author       = {Anastasios Nikolas Angelopoulos and
                  Stephen Bates and
                  Adam Fisch and
                  Lihua Lei and
                  Tal Schuster},
  title        = {Conformal Risk Control},
  booktitle    = {{ICLR}},
  publisher    = {OpenReview.net},
  year         = {2024}
}

@article{DBLP:journals/jmlr/ShaferV08,
  author       = {Glenn Shafer and
                  Vladimir Vovk},
  title        = {A Tutorial on Conformal Prediction},
  journal      = {J. Mach. Learn. Res.},
  volume       = {9},
  pages        = {371--421},
  year         = {2008}
}

@inproceedings{DBLP:conf/esws/HorchidanC23,
  author       = {Sonia Horchidan and
                  Paris Carbone},
  title        = {{ORB:} Empowering Graph Queries through Inference},
  booktitle    = {{ESWC} Workshops},
  series       = {{CEUR} Workshop Proceedings},
  volume       = {3443},
  publisher    = {CEUR-WS.org},
  year         = {2023}
}

@inproceedings{DBLP:conf/log/BestaISODPCH22,
  author       = {Maciej Besta and
                  Patrick Iff and
                  Florian Scheidl and
                  Kazuki Osawa and
                  Nikoli Dryden and
                  Michal Podstawski and
                  Tiancheng Chen and
                  Torsten Hoefler},
  title        = {Neural Graph Databases},
  booktitle    = {LoG},
  series       = {Proceedings of Machine Learning Research},
  volume       = {198},
  pages        = {31},
  publisher    = {{PMLR}},
  year         = {2022}
}

@article{DBLP:journals/tmlr/Ren0ZLC24,
  author       = {Hongyu Ren and
                  Mikhail Galkin and
                  Zhaocheng Zhu and
                  Jure Leskovec and
                  Michael Cochez},
  title        = {Neural Graph Reasoning: {A} Survey on Complex Logical Query Answering},
  journal      = {Trans. Mach. Learn. Res.},
  volume       = {2024},
  year         = {2024}
}

@inproceedings{zhu2022neural,
  title={Neural-symbolic models for logical queries on knowledge graphs},
  author={Zhu, Zhaocheng and Galkin, Mikhail and Zhang, Zuobai and Tang, Jian},
  booktitle={International conference on machine learning},
  pages={27454--27478},
  year={2022},
  organization={PMLR}
}

@inproceedings{DBLP:conf/nips/GalkinZ00Z24,
  author       = {Michael Galkin and
                  Jincheng Zhou and
                  Bruno Ribeiro and
                  Jian Tang and
                  Zhaocheng Zhu},
  title        = {A Foundation Model for Zero-shot Logical Query Reasoning},
  booktitle    = {NeurIPS},
  year         = {2024}
}

@article{hogan2021knowledge,
  title={Knowledge graphs},
  author={Hogan, Aidan and Blomqvist, Eva and Cochez, Michael and d’Amato, Claudia and Melo, Gerard De and Gutierrez, Claudio and Kirrane, Sabrina and Gayo, Jos{\'e} Emilio Labra and Navigli, Roberto and Neumaier, Sebastian and others},
  journal={ACM Computing Surveys (Csur)},
  volume={54},
  number={4},
  pages={1--37},
  year={2021},
  publisher={ACM New York, NY, USA}
}

@inproceedings{galkin2023towards,
  author       = {Mikhail Galkin and
                  Xinyu Yuan and
                  Hesham Mostafa and
                  Jian Tang and
                  Zhaocheng Zhu},
  title        = {Towards Foundation Models for Knowledge Graph Reasoning},
  booktitle    = {{ICLR}},
  publisher    = {OpenReview.net},
  year         = {2024}
}

@inproceedings{zhu2025conformalized,
  title={Conformalized answer set prediction for knowledge graph embedding},
  author={Zhu, Yuqicheng and Potyka, Nico and Pan, Jiarong and Xiong, Bo and He, Yunjie and Kharlamov, Evgeny and Staab, Steffen},
  booktitle={Proceedings of the 2025 Conference of the Nations of the Americas Chapter of the Association for Computational Linguistics: Human Language Technologies (Volume 1: Long Papers)},
  pages={731--750},
  year={2025}
}

@inproceedings{agarwal2013blinkdb,
  title={BlinkDB: queries with bounded errors and bounded response times on very large data},
  author={Agarwal, Sameer and Mozafari, Barzan and Panda, Aurojit and Milner, Henry and Madden, Samuel and Stoica, Ion},
  booktitle={Proceedings of the 8th ACM European conference on computer systems},
  pages={29--42},
  year={2013}
}

@inproceedings{ren2020query2box,
  title={Query2box: Reasoning Over Knowledge Graphs In Vector Space Using Box Embeddings},
  author={Ren, H and Hu, W and Leskovec, J},
  booktitle={International Conference on Learning Representations (ICLR)},
  year={2020}
}

@article{ren2020beta,
  title={Beta embeddings for multi-hop logical reasoning in knowledge graphs},
  author={Ren, Hongyu and Leskovec, Jure},
  journal={Advances in Neural Information Processing Systems},
  volume={33},
  pages={19716--19726},
  year={2020}
}

@inproceedings{park2018verdictdb,
  title={Verdictdb: Universalizing approximate query processing},
  author={Park, Yongjoo and Mozafari, Barzan and Sorenson, Joseph and Wang, Junhao},
  booktitle={Proceedings of the 2018 International Conference on Management of Data},
  pages={1461--1476},
  year={2018}
}

@inproceedings{hellerstein1997online,
  title={Online aggregation},
  author={Hellerstein, Joseph M and Haas, Peter J and Wang, Helen J},
  booktitle={Proceedings of the 1997 ACM SIGMOD international conference on Management of data},
  pages={171--182},
  year={1997}
}

@inproceedings{DBLP:conf/emnlp/XiongHW17,
  author       = {Wenhan Xiong and
                  Thien Hoang and
                  William Yang Wang},
  title        = {DeepPath: {A} Reinforcement Learning Method for Knowledge Graph Reasoning},
  booktitle    = {{EMNLP}},
  pages        = {564--573},
  publisher    = {Association for Computational Linguistics},
  year         = {2017}
}

@inproceedings{DBLP:conf/cidr/UrbanB24,
  author       = {Matthias Urban and
                  Carsten Binnig},
  title        = {{CAESURA:} Language Models as Multi-Modal Query Planners},
  booktitle    = {14th Conference on Innovative Data Systems Research, {CIDR} 2024,
                  Chaminade, HI, USA, January 14-17, 2024},
  publisher    = {www.cidrdb.org},
  year         = {2024},
  url          = {https://vldb.org/cidrdb/2024/caesura-language-models-as-multi-modal-query-planners.html},
  bibsource    = {dblp computer science bibliography, https://dblp.org}
}

@inproceedings{DBLP:conf/cidr/LiuRC0CCFK0SV25,
  author       = {Chunwei Liu and
                  Matthew Russo and
                  Michael J. Cafarella and
                  Lei Cao and
                  Peter Baile Chen and
                  Zui Chen and
                  Michael J. Franklin and
                  Tim Kraska and
                  Samuel Madden and
                  Rana Shahout and
                  Gerardo Vitagliano},
  title        = {Palimpzest: Optimizing AI-Powered Analytics with Declarative Query
                  Processing},
  booktitle    = {{CIDR}},
  publisher    = {www.cidrdb.org},
  year         = {2025}
}

@inproceedings{DBLP:conf/cidr/KaranasosIPSPPX20,
  author       = {Konstantinos Karanasos and
                  Matteo Interlandi and
                  Fotis Psallidas and
                  Rathijit Sen and
                  Kwanghyun Park and
                  Ivan Popivanov and
                  Doris Xin and
                  Supun Nakandala and
                  Subru Krishnan and
                  Markus Weimer and
                  Yuan Yu and
                  Raghu Ramakrishnan and
                  Carlo Curino},
  title        = {Extending Relational Query Processing with {ML} Inference},
  booktitle    = {10th Conference on Innovative Data Systems Research, {CIDR} 2020,
                  Amsterdam, The Netherlands, January 12-15, 2020, Online Proceedings},
  publisher    = {www.cidrdb.org},
  year         = {2020},
  url          = {https://vldb.org/cidrdb/2020/extending-relational-query-processing-with-ml-inference.html},
  bibsource    = {dblp computer science bibliography, https://dblp.org}
}

@inproceedings{DBLP:conf/sigmod/ParkSBSIK22,
  author       = {Kwanghyun Park and
                  Karla Saur and
                  Dalitso Banda and
                  Rathijit Sen and
                  Matteo Interlandi and
                  Konstantinos Karanasos},
  editor       = {Zachary G. Ives and
                  Angela Bonifati and
                  Amr El Abbadi},
  title        = {End-to-end Optimization of Machine Learning Prediction Queries},
  booktitle    = {{SIGMOD} '22: International Conference on Management of Data, Philadelphia,
                  PA, USA, June 12 - 17, 2022},
  pages        = {587--601},
  publisher    = {{ACM}},
  year         = {2022},
  url          = {https://doi.org/10.1145/3514221.3526141},
  doi          = {10.1145/3514221.3526141},
  bibsource    = {dblp computer science bibliography, https://dblp.org}
}

@article{DBLP:journals/pvldb/AroraYENHTR23,
  author       = {Simran Arora and
                  Brandon Yang and
                  Sabri Eyuboglu and
                  Avanika Narayan and
                  Andrew Hojel and
                  Immanuel Trummer and
                  Christopher R{\'{e}}},
  title        = {Language Models Enable Simple Systems for Generating Structured Views
                  of Heterogeneous Data Lakes},
  journal      = {Proc. {VLDB} Endow.},
  volume       = {17},
  number       = {2},
  pages        = {92--105},
  year         = {2023},
  url          = {https://www.vldb.org/pvldb/vol17/p92-arora.pdf},
  doi          = {10.14778/3626292.3626294},
  bibsource    = {dblp computer science bibliography, https://dblp.org}
}

@article{DBLP:journals/sigmod/MarcusNMTAK22,
  author       = {Ryan Marcus and
                  Parimarjan Negi and
                  Hongzi Mao and
                  Nesime Tatbul and
                  Mohammad Alizadeh and
                  Tim Kraska},
  title        = {Bao: Making Learned Query Optimization Practical},
  journal      = {{SIGMOD} Rec.},
  volume       = {51},
  number       = {1},
  pages        = {6--13},
  year         = {2022},
  url          = {https://doi.org/10.1145/3542700.3542703},
  doi          = {10.1145/3542700.3542703},
  bibsource    = {dblp computer science bibliography, https://dblp.org}
}

@inproceedings{DBLP:conf/sigmod/KraskaBCDP18,
  author       = {Tim Kraska and
                  Alex Beutel and
                  Ed H. Chi and
                  Jeffrey Dean and
                  Neoklis Polyzotis},
  editor       = {Gautam Das and
                  Christopher M. Jermaine and
                  Philip A. Bernstein},
  title        = {The Case for Learned Index Structures},
  booktitle    = {Proceedings of the 2018 International Conference on Management of
                  Data, {SIGMOD} Conference 2018, Houston, TX, USA, June 10-15, 2018},
  pages        = {489--504},
  publisher    = {{ACM}},
  year         = {2018},
  url          = {https://doi.org/10.1145/3183713.3196909},
  doi          = {10.1145/3183713.3196909},
  bibsource    = {dblp computer science bibliography, https://dblp.org}
}

@book{vovk2005algorithmic,
  title={Algorithmic learning in a random world},
  author={Vovk, Vladimir and Gammerman, Alexander and Shafer, Glenn},
  year={2005},
  publisher={Springer}
}

@inproceedings{DBLP:conf/icde/AntovaKO07a,
  author       = {Lyublena Antova and
                  Christoph Koch and
                  Dan Olteanu},
  title        = {MayBMS: Managing Incomplete Information with Probabilistic World-Set
                  Decompositions},
  booktitle    = {{ICDE}},
  pages        = {1479--1480},
  publisher    = {{IEEE} Computer Society},
  year         = {2007}
}

@inproceedings{DBLP:conf/vldb/DalviS04,
  author       = {Nilesh N. Dalvi and
                  Dan Suciu},
  title        = {Efficient Query Evaluation on Probabilistic Databases},
  booktitle    = {{VLDB}},
  pages        = {864--875},
  publisher    = {Morgan Kaufmann},
  year         = {2004}
}

@incollection{aggarwal2009trio,
  title={Trio a system for data uncertainty and lineage},
  author={Aggarwal, Charu C},
  booktitle={Managing and Mining Uncertain Data},
  pages={1--35},
  year={2009},
  publisher={Springer}
}

@article{lewis2020retrieval,
  title={Retrieval-augmented generation for knowledge-intensive nlp tasks},
  author={Lewis, Patrick and Perez, Ethan and Piktus, Aleksandra and Petroni, Fabio and Karpukhin, Vladimir and Goyal, Naman and K{\"u}ttler, Heinrich and Lewis, Mike and Yih, Wen-tau and Rockt{\"a}schel, Tim and others},
  journal={Advances in neural information processing systems},
  volume={33},
  pages={9459--9474},
  year={2020}
}

@article{chandak2023building,
  title={Building a knowledge graph to enable precision medicine},
  author={Chandak, Payal and Huang, Kexin and Zitnik, Marinka},
  journal={Scientific data},
  volume={10},
  number={1},
  pages={67},
  year={2023},
  publisher={Nature Publishing Group UK London}
}

@article{vayena2018machine,
    author="Effy, Vayena and Alessandro, Blasimme and I. Glenn Cohen",
    title="Machine learning in medicine: Addressing ethical challenges",
    journal="PLOS Medicine",
    ISSN="1549-1676",
    publisher="Public Library of Science (PLoS)",
    year="2018",
    month="11",
    volume="15",
    number="11",
    pages="e1002689",
    DOI="10.1371/journal.pmed.1002689",
    URL="https://cir.nii.ac.jp/crid/1360861291477987968"
}

@inproceedings{
  tabacof2020probability,
  title={Probability Calibration for Knowledge Graph Embedding Models},
  author={Tabacof, Pedro and Costabello, Luca},
  booktitle={International Conference on Learning Representations (ICLR)},
  year={2020},
  url={https://openreview.net/forum?id=S1g8K1BFwS}
}

@inproceedings{guo2017calibration,
  title={On calibration of modern neural networks},
  author={Guo, Chuan and Pleiss, Geoff and Sun, Yu and Weinberger, Kilian Q},
  booktitle={International conference on machine learning},
  pages={1321--1330},
  year={2017},
  organization={PMLR}
}

@inproceedings{DBLP:conf/aaai/ChenCSSZ19,
  author       = {Xuelu Chen and
                  Muhao Chen and
                  Weijia Shi and
                  Yizhou Sun and
                  Carlo Zaniolo},
  title        = {Embedding Uncertain Knowledge Graphs},
  booktitle    = {{AAAI}},
  pages        = {3363--3370},
  publisher    = {{AAAI} Press},
  year         = {2019}
}

@article{platt1999probabilistic,
  title={Probabilistic outputs for support vector machines and comparisons to regularized likelihood methods},
  author={Platt, John and others},
  journal={Advances in large margin classifiers},
  volume={10},
  number={3},
  pages={61--74},
  year={1999},
  publisher={Cambridge, MA}
}

@article{DBLP:journals/pvldb/KangBZ19,
  author       = {Daniel Kang and
                  Peter Bailis and
                  Matei Zaharia},
  title        = {BlazeIt: Optimizing Declarative Aggregation and Limit Queries for
                  Neural Network-Based Video Analytics},
  journal      = {Proc. {VLDB} Endow.},
  volume       = {13},
  number       = {4},
  pages        = {533--546},
  year         = {2019}
}

@article{DBLP:journals/pvldb/LiuGS25,
  author       = {Hanwen Liu and
                  Shashank Giridhara and
                  Ibrahim Sabek},
  title        = {Conformal Prediction for Verifiable Learned Query Optimization},
  journal      = {Proc. {VLDB} Endow.},
  volume       = {18},
  number       = {8},
  pages        = {2653--2666},
  year         = {2025}
}

@article{kang2017noscope,
  title={NoScope: Optimizing Neural Network Queries over Video at Scale},
  author={Kang, Daniel and Emmons, John and Abuzaid, Firas and Bailis, Peter and Zaharia, Matei},
  journal={Proceedings of the VLDB Endowment},
  volume={10},
  number={11},
  year={2017}
}

@article{foygel2021limits,
  title={The limits of distribution-free conditional predictive inference},
  author={Foygel Barber, Rina and Candes, Emmanuel J and Ramdas, Aaditya and Tibshirani, Ryan J},
  journal={Information and Inference: A Journal of the IMA},
  volume={10},
  number={2},
  pages={455--482},
  year={2021},
  publisher={Oxford University Press}
}

@inproceedings{gibbs_adaptive_2021,
	title = {Adaptive {Conformal} {Inference} {Under} {Distribution} {Shift}},
	volume = {34},
	url = {https://proceedings.neurips.cc/paper/2021/hash/0d441de75945e5acbc865406fc9a2559-Abstract.html},
	urldate = {2024-12-09},
	booktitle = {Advances in {Neural} {Information} {Processing} {Systems}},
	publisher = {Curran Associates, Inc.},
	author = {Gibbs, Isaac and Candes, Emmanuel},
	year = {2021},
	pages = {1660--1672},
}

@inproceedings{DBLP:conf/nips/TibshiraniBCR19,
  author       = {Ryan J. Tibshirani and
                  Rina Foygel Barber and
                  Emmanuel J. Cand{\`{e}}s and
                  Aaditya Ramdas},
  title        = {Conformal Prediction Under Covariate Shift},
  booktitle    = {NeurIPS},
  pages        = {2526--2536},
  year         = {2019}
}

@article{barber_conformal_2023,
	title = {Conformal prediction beyond exchangeability},
	volume = {51},
	issn = {0090-5364},
	url = {https://projecteuclid.org/journals/annals-of-statistics/volume-51/issue-2/Conformal-prediction-beyond-exchangeability/10.1214/23-AOS2276.full},
	doi = {10.1214/23-AOS2276},
	number = {2},
	urldate = {2025-06-17},
	journal = {The Annals of Statistics},
	author = {Barber, Rina Foygel and Candès, Emmanuel J. and Ramdas, Aaditya and Tibshirani, Ryan J.},
	month = apr,
	year = {2023},
}

@article{khattab2023dspy,
  title={Dspy: Compiling declarative language model calls into self-improving pipelines},
  author={Khattab, Omar and Singhvi, Arnav and Maheshwari, Paridhi and Zhang, Zhiyuan and Santhanam, Keshav and Vardhamanan, Sri and Haq, Saiful and Sharma, Ashutosh and Joshi, Thomas T and Moazam, Hanna and others},
  journal={arXiv preprint arXiv:2310.03714},
  year={2023}
}

@article{li2023dbet,
  title={dbET: Execution Time Distribution-based Plan Selection},
  author={Li, Yifan and Yu, Xiaohui and Koudas, Nick and Lin, Shu and Sun, Calvin and Chen, Chong},
  journal={Proceedings of the ACM on Management of Data},
  volume={1},
  number={1},
  pages={1--26},
  year={2023},
  publisher={ACM New York, NY, USA}
}

@article{patel2024semantic,
author = {Patel, Liana and Jha, Siddharth and Pan, Melissa and Gupta, Harshit and Asawa, Parth and Guestrin, Carlos and Zaharia, Matei},
title = {Semantic Operators and Their Optimization: Enabling LLM-Based Data Processing with Accuracy Guarantees in LOTUS},
year = {2025},
issue_date = {July 2025},
publisher = {VLDB Endowment},
volume = {18},
number = {11},
issn = {2150-8097},
url = {https://doi.org/10.14778/3749646.3749685},
doi = {10.14778/3749646.3749685},
journal = {Proc. VLDB Endow.},
month = jul,
pages = {4171–4184},
numpages = {14}
}

@article{bordes2013translating,
  title={Translating embeddings for modeling multi-relational data},
  author={Bordes, Antoine and Usunier, Nicolas and Garcia-Duran, Alberto and Weston, Jason and Yakhnenko, Oksana},
  journal={Advances in neural information processing systems},
  volume={26},
  year={2013}
}

@inproceedings{trouillon2016complex,
  title={Complex embeddings for simple link prediction},
  author={Trouillon, Th{\'e}o and Welbl, Johannes and Riedel, Sebastian and Gaussier, {\'E}ric and Bouchard, Guillaume},
  booktitle={International conference on machine learning},
  pages={2071--2080},
  year={2016},
  organization={PMLR}
}

@article{sun2019rotate,
  title={Rotate: Knowledge graph embedding by relational rotation in complex space},
  author={Sun, Zhiqing and Deng, Zhi-Hong and Nie, Jian-Yun and Tang, Jian},
  journal={arXiv preprint arXiv:1902.10197},
  year={2019}
}

\end{document}